\documentclass{amsart}
\usepackage[utf8]{inputenc}
\usepackage[english]{babel}
\usepackage{amsmath}
\usepackage{amsfonts}
\usepackage{amssymb}
\usepackage{tipa}
\usepackage{mathrsfs}
\usepackage{amsthm}
\usepackage{color}
\usepackage{graphicx}
\usepackage{epstopdf}
\usepackage{enumerate}
\usepackage{balance}

\newtheorem{theorem}{Theorem}
\newtheorem{coro}{Corollary}
\newtheorem{prop}{Proposition}
\theoremstyle{remark}
\newtheorem{ex}{Example}
\newtheorem{rem}{Remark}

\def\Tr{\mathrm{Tr}\,}
\def\Ha{\mathcal{H}}
\def\Ka{\mathcal{K}}
\def\Ce{\mathbb{C}}
\def\ptr{\mathrm{Tr}}
\def\states{\mathfrak{S}}
\def\<{\langle\,}
\def\>{\,\rangle}
\def\pmei{p_{\text{MEI}}}
\def\popt{p_{\text{opt}}}
\DeclareMathOperator{\diag}{\mathit{diag}}
\begin{document}
\title{Conditions for optimal input states for discrimination of quantum channels}
\author{Anna Jen\v cov\'a}
\author{Martin Pl\'avala}
\begin{abstract}
We find optimality conditions for testers in discrimination of quantum channels. These conditions are obtained using semidefinite programming and are similar to optimality conditions for discrimination of quantum states. We get a simple condition for existence of an  optimal tester with any given input state with maximal Schmidt rank, in particular with  a  maximally entangled input state. In case when maximally entangled state is not  optimal an upper bound on the optimal success probability is obtained. The results for discrimination of two channels are applied to covariant channels, qubit channels, unitary channels and simple projective measurements.
\end{abstract}
\maketitle

\section{Introduction}

The problem of multiple hypothesis testing in the setting of quantum channels can be formulated as follows. Assume that $\Phi$ is an unknown quantum channel, but some a priori information is available, in the sense that $\Phi$  is one of given channels $\Phi_1,\dots,\Phi_n$, with probabilities $\lambda_1,\dots,\lambda_n$. The task is to find a procedure that determines the true channel, with the greatest possible probability of success.

For quantum states, this problem was formulated  by Helstrom \cite{helstrom} and since then has been the subject of active research, see e.g. \cite{barnett-crocke_sdp} for an overview and further references.  Here, an ensemble $\{\lambda_i,\rho_i\}_{i=1}^n$ is given, where $\rho_1,\dots,\rho_n$ are quantum states with prior probabilities $\lambda_1,\dots,\lambda_n$, with a similar interpretation as above. A testing procedure, or a measurement, for this problem is described by a positive operator valued measure (POVM) $M$, defined as a collection of positive operators $M_1,\dots, M_n$ summing up to the identity operator $I$. The value $\Tr M_i\rho_j$ is interpreted as the probability that the procedure chooses $\rho_i$ while the true state is $\rho_j$. The task is to maximize the average success probability
\[
p(M)=\sum_i \lambda_i \Tr M_i\rho_i
\]
over all POVMs. In the case  $n=2$, it is well known that the optimal POVM is projection valued, given by the projections onto the positive and negative parts of the operator $\lambda\rho_1-(1-\lambda)\rho_2$, \cite{helstrom}. For $n>2$, there is no explicit expression for the optimal POVM in general, but it is known that a POVM $M$ is optimal if and only if it satisfies 
\begin{equation}\label{eq:optPOVM}
\sum_i \lambda_i\rho_iM_i\ge \lambda_j\rho_j,\qquad \forall j.
\end{equation}
This condition was obtained in \cite{holevo,yuen-kennedy-lax} using the methods of semidefinite programming.

In the case of quantum channels, a most general measurement scheme is described  by a triple $(\Ha_0,\rho,M)$, where $\Ha_0$ is an ancilla, $\rho$ a (pure) state 
on $\Ha\otimes \Ha_0$ and $M=\{M_1,\dots,M_n\}$ is a POVM on $\Ka\otimes \Ha_0$. For $i,j\in\{1,\dots,n\}$, the value
\[
\Tr M_i(\Phi_j\otimes id)(\rho)
\]
is interpreted as the probability that $\Phi_i$ is chosen when the true channel is $\Phi_j$. The average success probability is then
\begin{align}\label{eq:succprob}
p(M,\rho)=\sum_i \lambda_i \Tr M_i(\Phi_i\otimes id)(\rho).
\end{align}
The task is to maximize this value over all triples $(\Ha_0,\rho,M)$.

It was observed  \cite{kitaev,darianopp, sacchi_ent,sacchi_opt} that using entangled input states may give greater success probability and it was shown in \cite{PianiWatrous} that every entangled state is useful for some channel discrimination problem. However, there are situations when e.g. the maximally entangled input state does not give an optimal success probability. It is therefore important to find out whether an optimal scheme with a given input state exists. 

In the broader context of generalized decision problems, conditions for existence of an optimal  scheme with an input state having maximal Schmidt rank were found in \cite{ja_base}, a related problem was studied in \cite{matsumoto}. In the present paper, we show that these conditions can be obtained using the methods of semidefinite programming. Such methods were already applied before in the context of discrimination of quantum channels, see \cite{chiribella_sdp, Watrous-SDP, gutoski}. Compared to these works, we are more concerned with the choice of an optimal input state.
 It is an easy observation that if a given scheme $(\Ha_0, \rho,M)$ is optimal, then $M$ must be an optimal measurement for the ensemble $\{\lambda_i, (\Phi_i\otimes id)(\rho)\}$. We show that, at least in the case that the input state is assumed to have   maximal Schmidt rank, the optimality condition for a channel measurement can be divided into the condition \eqref{eq:optPOVM} for this ensemble and an additional condition that ensures optimality of the input state.
  If the Schmidt rank of the input state is not maximal we obtain a comparably weaker result, but show an example where the use of such an input state is required. 

As an important special case, we get a necessary and sufficient condition for existence of an optimal scheme with a maximally entangled input state. If this  condition  is not fulfilled, we give an upper bound on the optimal success probability. For discrimination of two channels, we use the known form of an  optimal POVM for  two states to obtain a relatively simple condition in terms of Choi matrices of the involved channels, which we call the \eqref{eq:MEI} condition. We also derive an upper bound on the diamond norm, which is tighter than the previously known bound given e.g. in \cite{bph2015generic}, see Remark \ref{rem:bounds} below. The results are  applied to discrimination of covariant channels, qubit channels, unitary channels and simple projective measurements.

The paper is organized as follows: in the next section we  rewrite the problem  as a problem of SDP from which we obtain necessary and sufficient conditions for optimal solution and derive an upper bound on the optimal success probability. In Section \ref{sec:two} we investigate the \eqref{eq:MEI} condition and the related bounds. It the last two sections, we study special cases of channels and  present some examples demonstrating the results.

\section{Optimality conditions}

Let $\Ha$ be a finite dimensional Hilbert space. We denote by  $B(\Ha)^+$ the set of positive operators and by
$\states(\Ha)$ the set of states, that is, positive operators of unit trace. A completely positive trace preserving map $\Phi: B(\Ha)\to B(\Ka)$ is called a channel, we will denote the set of all channels by $\mathcal C(\Ha,\Ka)$. Any linear map $\Phi: B(\Ha)\to B(\Ka)$ is represented
by its Choi matrix $C(\Phi)\in B(\Ka\otimes\Ha)$, defined in \cite{choi} as
\[
C(\Phi):=(\Phi\otimes id)(|\psi_\Ha\>\<\psi_\Ha|),\ |\psi_\Ha \>=\sum_i |i\>\otimes |i\>,
\]
 here $\{|i\>\}$ is a fixed orthonormal basis of $\Ha$. Note that $\Phi\in \mathcal C(\Ha,\Ka)$ if and only if $C(\Phi)$ is positive and $\ptr_\Ka C(\Phi)=I_\Ha$.

An alternative description of a channel measurement is given in terms of process POVMs \cite{ziman} (or  testers \cite{daria_testers}, see also \cite{guwat}). A process POVM is a collection $F=\{F_1,\dots,F_n\}$ of positive operators in $B(\Ka\otimes \Ha)$ with $\sum_i F_i=I\otimes \sigma$ for some state $\sigma\in \states(\Ha)$. For any triple $(\Ha_0,\rho,M)$, there is a process POVM $F$ such that for all  $ \Phi \in \mathcal C(\Ha,\Ka)$ and $i=1,\dots,n$,
\begin{align}\label{eq:ppovm}
\Tr M_i(\Phi\otimes id)(\rho)=\Tr C(\Phi)F_i.
\end{align}
 Conversely, for any process POVM $F$, one can find some ancilla $\Ha_0$, a pure state $\rho$ and a POVM $M$ such that 
(\ref{eq:ppovm}) holds \cite{ziman}. To see this, let $\rho=|\psi\>\<\psi|$, for $\psi\in \Ha\otimes \Ha_0$ and observe that by Schmidt decomposition, we have $|\psi\> = \sum_i \lambda_i (I \otimes U) |i\> \otimes |i\>$ for some unitary operator $U$ and $\lambda_i \geq 0$ for every $i$. Denoting $A = U \sum_i \lambda_i |i\>\<i|$ we get
\begin{equation*}
|\psi\> = (I \otimes A) |\psi_\Ha\>.
\end{equation*}
Since the channel in \eqref{eq:ppovm} acts only on the first part of the system we get
\begin{align}\label{eq:MF}
\Tr  (\Phi\otimes id)(\rho) M_i  =\Tr C(\Phi) (I \otimes A^*) M_i (I \otimes A)
\end{align}
and \eqref{eq:ppovm} holds with $F_i = (I \otimes A^*) M_i (I \otimes A)$. Conversely, let  $\sum_i F_i = I \otimes \sigma$ for $\sigma\in \states(\Ha)$. Let $\sigma^{-1/2}$ be defined on the support  of $\sigma$ and 0 elsewhere, then $M_i = (I \otimes \sigma^{-1/2}) F_i (I \otimes \sigma^{-1/2})$ is a POVM on $\Ka \otimes \Ha_0$ where now $\Ha_0 = \text{supp}(\sigma)$ and \eqref{eq:ppovm} holds as before. 

Using the description by process POVMs, we will show that the maximization of the success probability can be written as a problem of semidefinite programming:
\begin{align*}
\max_{F\in B(\mathbb C^n\otimes \Ka\otimes \Ha)} \Tr CF\\
\mbox{s.t. }\ 
\Tr F&=\dim(\Ka),\\
\Tr (I\otimes X_i )F&=0,\quad i=1,\dots,m,\\
F&\ge 0.
\end{align*}
Here $C=\sum_{i=1}^n |e^n_i\>\<e_i^n|\otimes \lambda_i C(\Phi_i)$, $\{|e_i^n\>\}$ is the canonical basis of $\mathbb C^n$ 
and $X_1,\dots, X_m$ is any basis of the (real) linear subspace 
\[
\mathcal L:=\{X=X^*\in B(\Ka\otimes \Ha), \Tr_\Ka X=0\}.
\]
To see this, note that according to \eqref{eq:ppovm}, equation \eqref{eq:succprob} can be rewritten as:
\begin{equation*}
p(M,\rho)=\sum_i \lambda_i \Tr C(\Phi_i)F_i.
\end{equation*}
Put $F:=\sum_{i=1}^n |e_i^n\>\<e_i^n|\otimes F_i \in B(\Ce^n \otimes \Ka \otimes \Ha)$. Then:
\begin{equation*}
\Tr CF = \sum_i \lambda_i \Tr C(\Phi_i)F_i
\end{equation*}
and the problem of maximizing $p(M, \rho)$ can be understood as the problem of maximizing $\Tr C F$. We have $\ptr_{\Ce^n} F=\sum_i F_i = I \otimes \sigma$ and from $F_i \geq 0$ it follows $F \geq 0$. Note also that since $C$ is block-diagonal, we may extend the maximization over all positive elements $F\in B(\mathbb C^n\otimes\Ka\otimes \Ha)$ with $\ptr_{\Ce^n} F=I\otimes \sigma$, $\sigma\in \states(\Ha)$ (and not only over block-diagonal ones).

To rewrite this to the more usable form  stated above, we need to note that $\ptr_{\Ce^n} F = I \otimes \sigma$ with $\Tr \sigma = 1$ if and only if $\Tr F (I \otimes X) = 0$ for all $X \in \mathcal{L}$ and $\Tr F = \dim(\Ka)$. To prove this statement, let us first assume that $\ptr_{\Ce^n} F = I \otimes \sigma$, then for any $X$,
\begin{align*}
\Tr F (I \otimes X) &= \Tr X \ptr_{\Ce^n} F = \Tr X(I \otimes \sigma) \\
&= \Tr \sigma \ptr_{\Ka} X 
\end{align*}
and $\Tr F = \Tr \ptr_{\Ce^n} F = \Tr I \otimes \sigma = \dim(\Ka)$.

Conversely, assume that $\Tr F (I \otimes X) = 0$ for all $X\in \mathcal{L}$ and $\Tr F = \dim(\Ka)$.  Consider $B(\Ka)$ as a Hilbert space with Hilbert-Schmidt inner product, then there is an orthonormal basis $\{ (\dim(\Ka))^{-1/2}  I,\chi_{\Ka,1},\dots,\chi_{\Ka,N}\}$ in $B(\Ka)$, where each $\chi_{\Ka,j}$ is a self-adjoint operator such that $\Tr \chi_{\Ka,j} = 0$. With respect to this basis, each $X \in \mathcal{L}$ can be expressed as:
\begin{equation*}
X = I\otimes X_{\Ha,0}+  \sum\limits_{j} \chi_{\Ka,j} \otimes X_{\Ha,j}
\end{equation*}
with some $X_{\Ha,i}\in B(\Ha)$. 
From the condition $\ptr_\Ka X=0$ we obtain $X_{\Ha,0}= 0$. Expressing  $\ptr_{\Ce^n} F = I\otimes F_{\Ha,0}+ \sum_j \chi_{\Ka,j} \otimes F_{\Ha,j}$ and using the  condition $\Tr F (I \otimes X) = 0$ we get 
\begin{equation*}
 \Tr  X_{\Ha,j} F_{\Ha,j}  = 0,\qquad \forall j>0.
\end{equation*}
Since there is no restriction on $X_{\Ha,j}$ for $j>0$, we must have $F_{\Ha,j}=0$ for all $j>0$, and hence $\ptr_{\Ce^n} F =   I \otimes F_{\Ha,0}=I\otimes \sigma$. To conclude the proof, from the condition $\Tr F = \dim(\Ka)$ we get $\Tr\sigma  = 1$.
Moreover, it is worth realizing that from the condition $F \geq 0$ we get $\sigma \geq 0$, hence $\sigma \in \states(\Ha)$.

The following result is obtained using standard methods of semidefinite programming (see e.g. \cite{barvinok}). The expression for maximal success probability was obtained also in \cite{chiribella_sdp}, in a more general setting.

\begin{theorem}\label{thm:opt_cond}
Let $\hat F$ be a process POVM. Then $\hat F$ is optimal if and only if there is some $\lambda_0\ge 0$ and some $\Phi_0\in \mathcal C(\Ha,\Ka)$, such that for all $i$,
\[
\lambda_i C(\Phi_i)\le \lambda_0 C(\Phi_0)
\]
and
\[
(\lambda_0C(\Phi_0)-\lambda_iC(\Phi_i))\hat F_i=0,\qquad \forall i.
\]
Moreover, in this case, the maximal success probability is 
\begin{align*}
\Tr \hat F C &=\max_F \Tr FC \\
&=\min_{\Phi\in \mathcal C(\Ha,\Ka)}\min\{\lambda, \lambda_i C(\Phi_i)\le \lambda C(\Phi),\ \forall i\}.
\end{align*}
\end{theorem}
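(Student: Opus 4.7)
The plan is to recognize this as a standard strong-duality result for the semidefinite program set up immediately above the statement, and then to read off the optimality characterization from complementary slackness. Concretely, I would form the Lagrangian of the primal,
\[
\mathcal{L}(F,\lambda,y) = \Tr CF - \lambda\bigl(\Tr F - \dim(\Ka)\bigr) - \sum_i y_i\, \Tr\bigl((I\otimes X_i) F\bigr),
\]
and observe that finiteness of the maximum over $F\ge 0$ forces $I_{\Ce^n}\otimes W \ge C$, where $W := \lambda I_{\Ka\otimes\Ha} + \sum_i y_i X_i$. Since $\{X_i\}$ is a basis of $\mathcal{L}$, the operator $W$ ranges exactly over the self-adjoint operators with $\Tr_\Ka W = \lambda\dim(\Ka) I_\Ha$; the inequality $I\otimes W \ge C \ge 0$ forces $W\ge 0$. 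Writing $\lambda_0 := \lambda\dim(\Ka)$, the partial trace condition becomes $\Tr_\Ka W = \lambda_0 I_\Ha$, so $W = \lambda_0 C(\Phi_0)$ for some $\Phi_0\in\mathcal{C}(\Ha,\Ka)$ and $\lambda_0\ge 0$ (the case $\lambda_0=0$ is degenerate). Since $C$ and $I\otimes W$ are block-diagonal in the $\Ce^n$ factor, the constraint $I\otimes W\ge C$ is equivalent to $\lambda_0 C(\Phi_0) \ge \lambda_i C(\Phi_i)$ for every $i$, and the dual objective $\lambda\dim(\Ka)$ equals $\lambda_0$. Thus the dual coincides with the minimum appearing in the theorem.

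Next I would establish strong duality by checking Slater's condition. A strictly feasible point for the primal is easy to exhibit, e.g.\ $F = (\dim(\Ka)/\dim(\Ce^n\otimes\Ka\otimes\Ha)) I$, which is strictly positive and satisfies the affine constraints. Since the primal feasible set is also compact (from $F\ge 0$ and $\Tr F = \dim(\Ka)$), the primal optimum is attained, and by standard SDP duality both optima coincide and the dual is attained as well. This yields the stated formula for the maximal success probability.

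For the optimality characterization, I would apply complementary slackness. If $\hat F$ is primal-optimal and $(\lambda_0,\Phi_0)$ is dual-optimal, then $\Tr\bigl(\hat F(I\otimes W - C)\bigr) = 0$; positivity of both factors forces $(I\otimes W - C)\hat F = 0$. Writing $\hat F$ block-diagonally as $\sum_i |e_i^n\>\<e_i^n|\otimes \hat F_i$, this becomes $(\lambda_0 C(\Phi_0) - \lambda_i C(\Phi_i))\hat F_i = 0$ for every $i$. Conversely, given dual-feasible $(\lambda_0,\Phi_0)$ and primal-feasible $\hat F$ satisfying these identities, summing $\Tr\bigl(\hat F_i (\lambda_0 C(\Phi_0) - \lambda_i C(\Phi_i))\bigr) = 0$ over $i$ and using $\sum_i\hat F_i = I\otimes\sigma$ with $\Tr\sigma=1$ shows $\Tr C\hat F = \lambda_0$, matching the dual bound, so both are optimal.

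I expect the main obstacle to be purely bookkeeping: correctly identifying the Lagrange multiplier $W$ with $\lambda_0 C(\Phi_0)$ under the partial-trace constraint, including the rescaling $\lambda_0 = \lambda\dim(\Ka)$, and being careful that the block-diagonal structure of $C$ in the $\Ce^n$ factor lets us replace the single constraint $I\otimes W\ge C$ by the $n$ pointwise inequalities. Once those identifications are in place, the argument is a direct application of the SDP duality theorem and complementary slackness, with no analytic difficulty.
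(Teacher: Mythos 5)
Your proposal is correct and follows essentially the same route as the paper: the same SDP dual (yours derived via the Lagrangian, the paper's stated directly), the identification of the dual variable with $\lambda_0 C(\Phi_0)$ through the partial-trace constraint, the very same strictly feasible point $F=\tfrac{1}{n\dim(\Ha)}I$ for Slater's condition, and complementary slackness in both directions. The only cosmetic difference is that you dismiss $\lambda_0=0$ as ``degenerate'' where the paper notes explicitly that $\Tr X_i=0$ and $\Tr C>0$ force $\lambda>0$; this does not affect correctness.
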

\begin{proof}
As first, we will formulate the dual problem. Let $X_i$, $i=1,\ldots,m$ be some basis of $\mathcal{L}$ and let $y=(y_1,\ldots,y_m) \in \mathbb{R}^m$, then dual problem is:
\begin{align*}
&\min_{\lambda \in \mathbb{R}, y \in \mathbb{R}^m} \lambda \\
\mbox{s.t. }\ &\sum\limits_{i=1}^{m} y_i ( I \otimes X_i ) + \dfrac{\lambda}{\dim(\Ka)} I \geq C.
\end{align*}
Let $\lambda,y_1,\dots,y_m$ be dual feasible, then since $\Tr X_i=0$ and $\Tr C>0$, we must have $\lambda>0$. If we denote $\sum\limits_{i=1}^{m} \dfrac{y_i}{\lambda}  X_i  + \dfrac{1}{\dim(\Ka)} I = :C'$, then
\begin{equation*}
\ptr_\Ka C' = I,
\end{equation*}
and from $(I\otimes C')\ge \lambda^{-1}C\ge0$ we obtain $C'\ge 0$. Hence there is some channel $\Phi \in C(\Ka, \Ha)$, such that $C' = C(\Phi)$. From the condition $\lambda ( I \otimes C(\Phi) ) \geq C$ we obtain
\begin{equation}
\lambda C(\Phi) \geq \lambda_i C(\Phi_i), \label{eq:optCond-ProofIneq}
\end{equation}
for all $i$. From here we see, that the dual problem may be formulated as:
\begin{equation}\label{eq:dual}
\min_{\Phi\in \mathcal C(\Ha,\Ka)}\min\{\lambda, \lambda_i C(\Phi_i)\le \lambda C(\Phi),\ \forall i\}.
\end{equation}

Now let $F' = \frac{1}{n\dim(\Ha)} I$, then $F'$ is a primal feasible plan. Moreover $F'$ belongs to the interior of the cone of positive operators, therefore by Slater's condition we obtain that the duality gap is zero, in other words $\max \Tr CF = \min \lambda$ or $\Tr C \hat{F} = \lambda_0$, where by $\hat{F}$ we denote the primal optimal plan and by $\lambda_0, y_0, \Phi_0$ we denote the dual optimal plan. Since $\hat{F}$ is feasible, we have $\lambda_0 = \sum_{i=1}^{m} y_{0,i} \Tr \hat{F} (I \otimes X_i) + \frac{\lambda_0}{\dim(\Ka)} \Tr \hat{F} = \lambda_0 \Tr ( I \otimes C(\Phi_0)) \hat{F}$ and  we get:
\begin{equation} \label{eq:sum-of-traces-FC}
\sum\limits_i \Tr ( \lambda_0 C(\Phi_0) - \lambda_i C(\Phi_i )) \hat{F}_i = 0.
\end{equation}
As $\lambda_0 C(\Phi_0) - \lambda_i C(\Phi_i ) \geq 0$ and $\hat F_i\ge 0$, the sum may be zero if and only if all summands are zero. Moreover, trace of the product of two positive matrices is zero if and only if their product is zero. To see this let $A, B \geq 0$ and $\Tr (AB) = 0$. We have $\Tr AB = \Tr ((A^{\frac{1}{2}} B^{\frac{1}{2}})^* A^{\frac{1}{2}} B^{\frac{1}{2}} ) = 0$ and hence $A^{\frac{1}{2}} B^{\frac{1}{2}} = 0$ and $AB = 0$.

By the above argumentation, we get  from \eqref{eq:sum-of-traces-FC}
\begin{equation}
\left( \lambda_0 C(\Phi_0) - \lambda_i C(\Phi_i ) \right) \hat{F}_i = 0, \qquad \forall i.\label{eq:optCond-ProofFinal}
\end{equation}

On the other hand, the condition \eqref{eq:optCond-ProofIneq} must hold for any dual feasible plan, but if for some primal and dual feasible plans the condition \eqref{eq:optCond-ProofFinal} holds, then the duality gap for these plans is zero and they are optimal. This concludes the proof.
\end{proof}

Using this result, we can characterize optimality of measurement schemes with input states of maximal Schmidt rank. 
\begin{coro} \label{coro:OptimCond}
Let $\rho\in \states(\Ha\otimes \Ha)$ be a pure state such that $\ptr_1\rho=:\rho_2$ is invertible. Then a measurement scheme $(\Ha,\rho,M)$ is optimal if and only if
\begin{enumerate}[(i)]
\item $Z:=\sum_i\lambda_i (\Phi_i \otimes id)(\rho)M_i $ majorizes $\lambda_i (\Phi_i \otimes id)(\rho)$ for all $i$  \label{item:OptimCondMajor}
\item $\Tr_\Ka Z\propto \rho_2$. \label{item:OptimCondPtr}
\end{enumerate}
\end{coro}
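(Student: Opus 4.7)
The plan is to apply Theorem \ref{thm:opt_cond} to the process POVM $\hat F_i = (I\otimes A^*)M_i(I\otimes A)$ associated with $(\Ha,\rho,M)$ via the correspondence recalled before the theorem, where $A$ is defined by $|\psi\>=(I\otimes A)|\psi_\Ha\>$ and $\rho=|\psi\>\<\psi|$. The maximal Schmidt rank hypothesis is equivalent to invertibility of $A$, and gives $\rho_2 = AA^*$. The single identity that drives the whole translation is
\[
(\Phi\otimes id)(\rho) = (I\otimes A)\,C(\Phi)\,(I\otimes A^*)
\]
for every linear $\Phi: B(\Ha)\to B(\Ka)$; in particular $\Tr_\Ka(\Phi\otimes id)(\rho)=\rho_2$ whenever $\Phi$ is trace preserving.

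For the forward direction, assume $(\Ha,\rho,M)$ is optimal, so $\hat F$ is optimal, and extract $\lambda_0\ge 0$ and $\Phi_0\in\mathcal C(\Ha,\Ka)$ from Theorem \ref{thm:opt_cond}. Sum the equalities $\lambda_0 C(\Phi_0)\hat F_i=\lambda_i C(\Phi_i)\hat F_i$ over $i$, use $\sum_i \hat F_i=I\otimes A^*A$, and multiply the result on the left by $(I\otimes A)$ and on the right by $(I\otimes A^{-1})$: the right-hand side collapses to $\lambda_0(\Phi_0\otimes id)(\rho)$ and the left-hand side to $Z$, so $Z=\lambda_0(\Phi_0\otimes id)(\rho)$. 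Then (i) follows by conjugating the inequality $\lambda_0 C(\Phi_0)\ge\lambda_i C(\Phi_i)$ via $X\mapsto (I\otimes A)X(I\otimes A^*)$, while (ii) follows by taking $\Tr_\Ka$ of $Z=\lambda_0(\Phi_0\otimes id)(\rho)$ and invoking trace preservation, which yields $\Tr_\Ka Z=\lambda_0\rho_2$.

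For the converse, assume (i) and (ii), let $\lambda_0\ge 0$ be the proportionality constant in (ii) (necessarily $\lambda_0=p(M,\rho)$), and define $C(\Phi_0):=\lambda_0^{-1}(I\otimes A^{-1})Z(I\otimes A^{-*})$. Positivity of $Z$, implied by (i) since $Z\ge\lambda_i(\Phi_i\otimes id)(\rho)\ge 0$, gives $C(\Phi_0)\ge 0$; and (ii) together with the key identity gives $\Tr_\Ka C(\Phi_0)=I$, so $\Phi_0\in\mathcal C(\Ha,\Ka)$ is a genuine channel. Condition (i) transports to $\lambda_0 C(\Phi_0)\ge\lambda_i C(\Phi_i)$ because $(I\otimes A)$ is invertible. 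For the orthogonality identity, the standard trace argument already used inside the proof of Theorem \ref{thm:opt_cond} applies: the nonnegative quantities $\Tr[(Z-\lambda_i(\Phi_i\otimes id)(\rho))M_i]$ sum to $\Tr Z-p(M,\rho)=0$, hence each is zero and therefore $(Z-\lambda_i(\Phi_i\otimes id)(\rho))M_i=0$. Multiplying this on the left by $(I\otimes A^{-1})$ and on the right by $(I\otimes A)$ and using $Z=\lambda_0(\Phi_0\otimes id)(\rho)$ yields $(\lambda_0 C(\Phi_0)-\lambda_i C(\Phi_i))\hat F_i=0$, so $\hat F$ satisfies the hypotheses of Theorem \ref{thm:opt_cond} and is optimal. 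The only non-routine point is verifying that $\Phi_0$ is a bona fide channel; the full-Schmidt-rank hypothesis is precisely what makes the conjugation by $(I\otimes A)$ invertible in both directions and guarantees trace preservation on all of $\Ha$ rather than merely on the support of $\rho_2$.
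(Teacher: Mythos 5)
Your proposal is correct and takes essentially the same route as the paper: represent the scheme by the process POVM $\hat F_i=(I\otimes A^*)M_i(I\otimes A)$, apply Theorem \ref{thm:opt_cond}, and transport its two conditions through conjugation by the invertible $(I\otimes A)$ using the identity $(\Phi\otimes id)(\rho)=(I\otimes A)C(\Phi)(I\otimes A^*)$. The only cosmetic difference is that in the converse you run the trace-splitting/orthogonality argument directly on $(Z-\lambda_i(\Phi_i\otimes id)(\rho))M_i$ before conjugating, whereas the paper conjugates first and repeats that argument in the Choi picture.
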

\begin{proof} Let $\rho=|\psi\>\<\psi|$ and let $A\in B(\Ha)$ be the operator such that $|\psi\>=(I\otimes A)|\psi_\Ha\>$, so that the process POVM corresponding to $(\Ha,\rho,M)$ is  given by $\hat F_i=(I\otimes A^*)M_i(I\otimes A)$, see \eqref{eq:MF}.
Note that by our assumptions, $A$ is invertible,  $\rho_2=AA^*$ and $\sum_i \hat F_i=I\otimes A^*A$.

Assume that $\hat F$ is optimal, then by Theorem \ref{thm:opt_cond}, there must be some $\lambda_0>0$ and $\Phi_0\in \mathcal C(\Ha,\Ka)$ such that $\lambda_0C(\Phi_0)\ge \lambda_iC(\Phi_i)$  and 
\begin{equation*}
(\lambda_0 C(\Phi_0)-\lambda_iC(\Phi_i))\hat F_i=0, \qquad \forall i.
\end{equation*}
Summing up over $i$, we obtain
\begin{equation*}
\lambda_0 C(\Phi_0)(I\otimes A^*A) = \sum_i \lambda_i C(\Phi_i ) \hat{F}_i.
\end{equation*}
Multiplying the above equality by $(I\otimes A)$ from the left and by $(I\otimes A^{-1})$ from the right, we get using the above expression for $\hat F_i$,
\begin{align*}
&\lambda_0(I\otimes A)C(\Phi_0)(I\otimes A^*)\\
&=\sum_i\lambda_i(I\otimes A)C(\Phi_i)(I\otimes A^*)M_i =Z.
\end{align*}
The two conditions follow easily from this equality.

Assume conversely that  the conditions \eqref{item:OptimCondMajor}, \eqref{item:OptimCondPtr} are satisfied. Put $Z_0=(I\otimes A^{-1})Z(I\otimes (A^*)^{-1})$, then (i) and (ii) imply that $Z_0\ge 0$ and $\ptr_\Ka Z_0\propto I$. It follows that there is some positive number $\lambda_0$ and $\Phi_0\in \mathcal C(\Ha,\Ka)$ such that $Z_0=\lambda_0C(\Phi_0)$. Moreover, \eqref{item:OptimCondMajor} implies that $\lambda_0C(\Phi_0)\ge \lambda_iC(\Phi_i)$ for all $i$ and 
\begin{align*}
\lambda_0 C(\Phi_0) (I \otimes (A^*A))=   \sum_i \lambda_i C(\Phi_i)\hat F_i.
 \end{align*}
It follows that  $\sum_i(\lambda_0 C(\Phi_0)-\lambda_iC(\Phi_i))\hat F_i=0$ and this implies the optimality condition of Theorem \ref{thm:opt_cond}, exactly as in its proof. 
\end{proof}

Note that \eqref{item:OptimCondMajor} is  the  optimality condition \eqref{eq:optPOVM} for a POVM in discrimination of the ensemble $\{\lambda_i, \rho_i\}$, where $\rho_i=(\Phi_i \otimes id)(\rho)$. In other words, if $\hat M$ is an optimal POVM for this ensemble and 
 \[
 \hat Z:= \sum_i \lambda_i (\Phi_i \otimes id)(\rho) \hat M_i,
 \]
the majorization 
$\hat Z\ge \lambda_i (\Phi_i \otimes id)(\rho)$ is  satisfied. It follows that the existence of an optimal scheme with the given input state is equivalent to the condition (ii). Clearly, in this case, 
$(\Ha,\rho,\hat M)$ is the optimal scheme and  the optimal success probability is
$\popt=\Tr \hat Z$.

Next, we show that the  conditions of Corollary \ref{coro:OptimCond} are necessary for a general pure input state.
\begin{coro} \label{coro:NotFullRankState}
Let $\rho\in \states(\Ha\otimes \Ha)$ be a  pure state such that $\ptr_1\rho=:\rho_2$. Then a measurement scheme $(\Ha,\rho,M)$ is optimal only if the the conditions \eqref{item:OptimCondMajor}, \eqref{item:OptimCondPtr} from the previous corollary hold.
\end{coro}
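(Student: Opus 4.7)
The plan is to combine the process POVM machinery of Theorem \ref{thm:opt_cond} with the state-discrimination optimality condition \eqref{eq:optPOVM}, replacing the use of $A^{-1}$ in the proof of Corollary \ref{coro:OptimCond} by a support argument. Writing $|\psi\>=(I\otimes A)|\psi_\Ha\>$, I have $\rho_2=AA^*$ and $\mathrm{ran}(A)=\mathrm{supp}(\rho_2)=:\Ha'$, but now $A$ need not be invertible. Set $\sigma_i=(\Phi_i\otimes id)(\rho)$ and $\hat F_i=(I\otimes A^*)M_i(I\otimes A)$.

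First I would derive \eqref{item:OptimCondMajor}. Since $(\Ha,\rho,M)$ is optimal, $M$ is automatically optimal for discriminating the state ensemble $\{\lambda_i,\sigma_i\}$, as observed just above Corollary \ref{coro:OptimCond}. The condition \eqref{eq:optPOVM} applied to this ensemble then gives $Z:=\sum_i\lambda_i\sigma_iM_i\ge\lambda_j\sigma_j$ for every $j$, which is \eqref{item:OptimCondMajor}; this operator inequality also forces $Z$ to be self-adjoint.

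Next, for \eqref{item:OptimCondPtr}, optimality of $\hat F$ and Theorem \ref{thm:opt_cond} supply $\lambda_0\ge 0$ and $\Phi_0\in\mathcal C(\Ha,\Ka)$ with $(\lambda_0 C(\Phi_0)-\lambda_i C(\Phi_i))\hat F_i=0$. Summing in $i$ and multiplying by $I\otimes A$ on the left, exactly as in the proof of Corollary \ref{coro:OptimCond} but without post-multiplying by $I\otimes A^{-1}$, I obtain $Z(I\otimes A)=\lambda_0\sigma_0(I\otimes A)$ with $\sigma_0=(\Phi_0\otimes id)(\rho)$. Taking $\ptr_\Ka$ and using trace preservation of $\Phi_0$ yields $(\ptr_\Ka Z)A=\lambda_0\rho_2\,A$, so $\ptr_\Ka Z$ and $\lambda_0\rho_2$ coincide on $\mathrm{ran}(A)=\Ha'$.

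The hard part, exactly where Corollary \ref{coro:OptimCond} used $A^{-1}$, is to extend this identity from $\Ha'$ to all of $\Ha$. My plan is a support argument: each $\sigma_i$ has column range in $\Ka\otimes\Ha'$, hence so does $Z$, and therefore $\ptr_\Ka Z$ has column range in $\Ha'$; Hermiticity of $Z$ from the first step then puts its row range in $\Ha'$ as well. Since $\rho_2$ is likewise supported in $\Ha'$, the equality $\ptr_\Ka Z=\lambda_0\rho_2$ established on $\Ha'$ extends globally, giving \eqref{item:OptimCondPtr}.
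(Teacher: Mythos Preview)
Your argument is correct, but it is genuinely different from the paper's. The paper does not redo the computation of Corollary~\ref{coro:OptimCond} with a support argument in place of $A^{-1}$; instead it performs a \emph{reduction} to the full-rank case. Concretely, it restricts the channels $\Phi_i$ to the subspace $\Ha_2'\cong\mathrm{supp}(\rho_2)$, observes that the scheme $(\Ha_2,\rho,PMP)$ (with $P$ the projection onto $\Ka\otimes\mathrm{supp}(\rho_2)$) is optimal for the \emph{restricted} discrimination problem and has full Schmidt rank there, and then applies Corollary~\ref{coro:OptimCond} directly. Conditions \eqref{item:OptimCondMajor} and \eqref{item:OptimCondPtr} for the restricted problem are then the same as for the original one because $Z$, the $\sigma_i$, and $\rho_2$ are all supported on $\Ka\otimes\mathrm{supp}(\rho_2)$.

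Your approach buys a bit more transparency in where each condition comes from: you get \eqref{item:OptimCondMajor} purely from the state-discrimination optimality criterion \eqref{eq:optPOVM} (independent of Theorem~\ref{thm:opt_cond}), and \eqref{item:OptimCondPtr} from Theorem~\ref{thm:opt_cond} via the identity $Z(I\otimes A)=\lambda_0\sigma_0(I\otimes A)$ together with the support/Hermiticity argument. The paper's route is shorter and more conceptual, reusing Corollary~\ref{coro:OptimCond} wholesale rather than re-deriving it, but it leaves implicit the verification that optimality of the original scheme forces optimality of the restricted scheme for the restricted channels (which follows since any scheme for the restricted problem embeds into the original with the same success probability). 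Both arguments ultimately hinge on the same fact, that everything of interest lives on $\Ka\otimes\mathrm{supp}(\rho_2)$.
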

\begin{proof}
We will show that the measurement scheme is optimal for some problem with reduced input space.
Let us denote by $\Ha_2$ the support of $\rho_2$. Since  $\rho$ is pure,  it must be of the form $\rho = |\psi\>\<\psi| = \sum_{i,j} \sqrt{\xi_i \xi_j} |i \>\< j | \otimes |\tilde{i} \>\< \tilde{j} |$ for some Schmidt decomposition of $|\psi\>$. From here we see that $|\psi\> \in \Ha'_2 \otimes \Ha_2$, where $\Ha'_2$ is a  subspace isomorphic to $\Ha_2$. Let $\Phi_i'$ be the restriction of $\Phi_i$ to $B(\Ha_2')$ and let $P$ be the projection onto $\Ka\otimes \Ha_2$, then it is clear that $\Phi_i'\in \mathcal C(\Ha_2',\Ka)$, moreover, $(\Ha_2,\rho,PMP)$ defines an optimal measurement scheme for the reduced channels, with full Schmidt rank input state. The rest follows from the previous corollary.
\end{proof}

In general, the opposite implication does not hold. That is because if we limit the problem to  some subspace $\Ha_{sub}$ of the original Hilbert space $\Ha$, then in general we don't have a guarantee that the optimal input state will be supported on a subspace of  the form $\Ha_{sub} \otimes \Ha_{anc}$, or in other words we would have to maximize the average success probability over all choices of the subspace $\Ha_{sub}$. We demonstrate this by the following simple example.

\begin{ex}
Let $\rho = |\psi\>\<\psi| \otimes |\varphi\>\<\varphi|$, where $|\psi\>, |\varphi\> \in \Ha$ and let   $M_i = \tilde{M}_i \otimes I$, where $\tilde{M}_i$ is the optimal POVM for discrimination of the ensemble $\{\lambda_i, \Phi_i ( |\psi\>\<\psi| )\}$. By \eqref{eq:optPOVM} we have
\begin{equation*}
\tilde Z = \sum_i \lambda_i \Phi_i (|\psi\>\<\psi|) \tilde{M}_i \geq \lambda_i \Phi_i (|\psi\>\<\psi|)
\end{equation*}
and $Z=\tilde Z\otimes |\varphi\>\<\varphi|$. 
It is easy to see that both conditions \eqref{item:OptimCondMajor} and \eqref{item:OptimCondPtr} are satisfied, but as argued in \cite{PianiWatrous}, there are cases when entangled input states give strictly larger probability of success than any separable state, so that a scheme of the form $(\Ha,\rho,M)$ cannot be  optimal.
\end{ex}

It seems that optimality of  input states  strongly depends on the structure of the channels.  In some cases it is even necessary to use an input state with lower Schmidt rank, because using maximal Schmidt rank input state would "waste" some normalization of the input state on parts of the channels where it is unnecessary, as will be demonstrated in Example \ref{ex:SPM}. It is an open question whether some stronger conditions for general input states can be obtained. See also \cite{sacchi_pauli, sacchi_opt} for a discussion of a similar problem in the case of qubit Pauli channels.

We will next present an upper bound for $\popt$ in the case that condition \eqref{item:OptimCondPtr} is violated. We assume that the input state $\rho$ is maximally entangled, but a similar bound can be obtained  for any input state having a maximal Schmidt rank.

\begin{theorem} \label{thm:ErrorEstimate} Let $M$ be an optimal POVM for discrimination of the ensemble $\{\lambda_i, \dim(\Ha)^{-1} C(\Phi_i) \}$ and let $Z = \sum_i \lambda_i C(\Phi_i) M_i$, $\pmei = \dim(\Ha)^{-1} \Tr Z$. Let $\lVert\cdot \rVert$ denote the operator norm. Then the optimal success probability $\popt$ satisfies
 \[
\pmei \le \popt \le \Vert \ptr_\Ka Z \Vert.
\]
\end{theorem}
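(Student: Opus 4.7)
The plan is to prove the two inequalities separately.

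For the lower bound $\popt\ge\pmei$, I would simply exhibit a measurement scheme that achieves $\pmei$. Take the maximally entangled input state $\rho := \dim(\Ha)^{-1}|\psi_\Ha\>\<\psi_\Ha|$, so that $(\Phi_i\otimes id)(\rho) = \dim(\Ha)^{-1} C(\Phi_i)$. Coupling this state with the given POVM $M$ gives
\[
p(M,\rho) = \dim(\Ha)^{-1}\sum_i \lambda_i \Tr C(\Phi_i) M_i = \dim(\Ha)^{-1}\Tr Z = \pmei,
\]
which lower-bounds $\popt$ by definition of the latter as a supremum over schemes.

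For the upper bound, I would feed a carefully chosen feasible point into the dual formulation
\[
\popt = \min_{\Phi\in\mathcal C(\Ha,\Ka)}\min\{\lambda : \lambda_i C(\Phi_i)\le \lambda C(\Phi),\ \forall i\}
\]
from Theorem \ref{thm:opt_cond}, with $\lambda_0 := \lVert \ptr_\Ka Z\rVert$. Since $M$ is optimal for the state ensemble $\{\lambda_i,\dim(\Ha)^{-1} C(\Phi_i)\}$, condition \eqref{eq:optPOVM} yields $Z\ge \lambda_i C(\Phi_i)$ for every $i$; in particular $Z\ge 0$, and therefore $\ptr_\Ka Z\ge 0$, so $\lambda_0 I_\Ha-\ptr_\Ka Z\ge 0$. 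The natural ``completion''
\[
Z' := Z + \tfrac{1}{\dim(\Ka)}\,I_\Ka\otimes(\lambda_0 I_\Ha-\ptr_\Ka Z)
\]
is then positive, satisfies $\ptr_\Ka Z' = \lambda_0 I_\Ha$, and dominates $Z$. Hence $Z'=\lambda_0 C(\Phi_0)$ for some $\Phi_0\in\mathcal C(\Ha,\Ka)$, and $\lambda_0 C(\Phi_0)\ge \lambda_i C(\Phi_i)$ for every $i$. Plugging $(\lambda_0,\Phi_0)$ into the dual formula gives $\popt\le \lambda_0 = \lVert\ptr_\Ka Z\rVert$.

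The one non-routine idea is the completion $Z\mapsto Z'$, namely padding $Z$ along the $I_\Ka$ direction so that its partial trace over $\Ka$ becomes a multiple of the identity --- precisely the defining property, up to scale, of a Choi matrix of a channel. The operator norm $\lVert\ptr_\Ka Z\rVert$ enters as exactly the smallest scalar for which such a positive completion exists, which is why it is the natural bound. As a sanity check, when condition \eqref{item:OptimCondPtr} of Corollary \ref{coro:OptimCond} holds, $\ptr_\Ka Z$ is already a multiple of $I_\Ha$, the correction term vanishes, and the bound collapses to $\popt=\pmei$, as it should.
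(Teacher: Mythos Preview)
Your proof is correct. The lower bound is handled identically to the paper. For the upper bound, both you and the paper start from the same observation that optimality of $M$ gives $Z\ge\lambda_iC(\Phi_i)$ for all $i$, and then seek a dual feasible pair $(\lambda,\Phi)$ with $\lambda C(\Phi)\ge Z$. The paper proceeds abstractly: it identifies the infimum $\lambda'_0=\inf_{\Phi}\inf\{\lambda:Z\le\lambda C(\Phi)\}$ with the diamond norm $\lVert\xi\rVert_\diamond$ of the completely positive map $\xi$ with $C(\xi)=Z$, invoking an external reference, and then simplifies $\lVert\xi\rVert_\diamond=\lVert\ptr_\Ka Z\rVert$ using complete positivity. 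You instead construct an explicit dual feasible point by the padding $Z'=Z+\dim(\Ka)^{-1}I_\Ka\otimes(\lambda_0 I_\Ha-\ptr_\Ka Z)$, which is more elementary and entirely self-contained. The paper's route has the minor bonus of showing that $\lVert\ptr_\Ka Z\rVert$ is in fact the \emph{tightest} bound obtainable by majorizing $Z$ with a scaled Choi matrix, whereas your construction only shows it is \emph{a} bound; but the theorem as stated asks only for the inequality, which you deliver cleanly.
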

\begin{proof} Note that $\pmei$ is the largest success probability that can be obtained by the maximally entangled input state, this implies the first inequality. Further, note that we have $\lambda_iC(\Phi_i) \le Z$ by optimality of the POVM $M$. If now $\lambda>0$ and $\Phi\in \mathcal C(\Ha,\Ka)$ are such that $Z \le \lambda C(\Phi)$, then $\lambda$,$\Phi$ correspond to a dual feasible plan, hence  $\popt \le \lambda$ by \eqref{eq:dual}. To obtain the tightest upper bound in this way, we put
\[
\lambda'_0:=\inf_{\Phi\in \mathcal C(\Ha,\Ka)}\inf\{\lambda>0, Z\le \lambda C(\Phi)\}. 
\]
By the Choi isomorphism, there is some completely positive map $\xi:B(\Ha)\to B(\Ka)$, such that $Z=C(\xi)$. As it was shown in \cite{ja_base} (see Corollary 2 and Section 3.1), $\lambda'_0=\lVert\xi\rVert_\diamond$, where the diamond norm is defined as
\[
\lVert\xi\rVert_\diamond=\sup_{\tau\in\states(\Ha\otimes \Ha)} \lVert(\xi\otimes id) (\tau)\rVert_1.
\] 
Moreover,   since $\xi$ is completely positive, this norm simplifies to
\begin{align*}
\lambda_0'=\lVert\xi\rVert_\diamond&=\sup_{\psi\in \states (\Ha)}\Tr \xi(\psi)= \sup_{\psi\in \states (\Ha)}\Tr Z(I\otimes \psi) \\
&= \sup_{\psi\in \states (\Ha)}\Tr\ptr_\Ka[Z]\psi = \lVert\ptr_\Ka Z \rVert.
\end{align*}
\end{proof}

In general, the bound that we obtain in this way does not have to be meaningful, that is, it may happen that $\Vert \ptr_\Ka Z \Vert > 1$. But, as will be demonstrated by the examples in the last section, there are cases when the bound is meaningful, or even tight.

\begin{rem}\label{rem:epsilon} Note that if $\ptr_\Ka Z=cI$, then $c=p_{MEI}$ and the value of $\epsilon:=\lVert p_{MEI}^{-1} \ptr_\Ka Z-I\rVert$ indicates how much the condition (ii) is violated. It is easy to see that $\lVert \ptr_\Ka Z\rVert \le (1+\epsilon)p_{MEI}$, this shows that if $\epsilon$ is small, the maximally entangled state is close to optimal.\end{rem}

\section{Discrimination of two channels by maximally entangled input states}\label{sec:two}

Let $n=2$ and $\Phi_1,\Phi_2\in \mathcal C(\Ha,\Ka)$. The following notation will be used throughout. Let $\lambda\in (0,1)$, then we put 
\[
\Phi_\lambda=\lambda\Phi_1-(1-\lambda)\Phi_2
\]
and 
\begin{equation}
\Delta_\lambda=\lambda C(\Phi_1)-(1-\lambda)C(\Phi_2)=C(\Phi_\lambda). \label{eq:DeltaLambdaDef}
\end{equation}
Let $\rho=\dim(\Ha)^{-1}|\psi_\Ha\>\<\psi_\Ha|$ be the maximally entangled state and consider any two-outcome POVM on $\Ka\otimes \Ha$,  given by
$\{M,I-M\}$ for some operator $0\le M \le I$ on $\Ka\otimes \Ha$.  The average success probability for the triple
$(\Ha,\rho,M)$  as defined by equation \eqref{eq:succprob} is:
\begin{align*}
p(M, \rho) = \dfrac{1}{\dim(\Ha)} \Tr \Delta_\lambda M + (1 - \lambda ),
\end{align*}
The optimal POVM is obtained if  $M$ is the projection onto the support of the positive part of $\Delta_\lambda$. In this case,  
\begin{align*}
 Z = \sum_i\lambda_i  C(\Phi_i) M_i= (1 - \lambda) C(\Phi_2) + ( \Delta_\lambda )_+
\end{align*}
and  
\[
\pmei = \dim(\Ha)^{-1}\Tr {Z}=\frac{1}{2}(1 + \dim(\Ha)^{-1} \Tr | \Delta_\lambda |).
\]

\begin{coro}\label{coro:maxent} An optimal measurement scheme $(\Ha,\rho,M)$ with a pure maximally entangled input state $\rho$ exists if and only if the Choi operators satisfy 
\begin{equation} \label{eq:MEI}
\Tr_\Ka |\Delta_\lambda|\propto I.\tag{MEI}
\end{equation}
\end{coro}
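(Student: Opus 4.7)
The plan is to apply Corollary~\ref{coro:OptimCond} directly to the maximally entangled input $\rho=\dim(\Ha)^{-1}|\psi_\Ha\>\<\psi_\Ha|$. Its partial trace $\rho_2=\dim(\Ha)^{-1}I$ is invertible, so the corollary applies and, by the remark immediately following it, the existence of an optimal scheme with this input is equivalent to condition (ii) alone, provided $M$ is chosen to be an optimal POVM for the induced state ensemble $\{\lambda_i,(\Phi_i\otimes id)(\rho)\}$. Condition (i) then holds automatically, as it is just the ensemble optimality condition \eqref{eq:optPOVM}.

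For $n=2$, the Helstrom solution supplies an optimal POVM: $M_1=M$, $M_2=I-M$, where $M$ is the projection onto the support of $(\Delta_\lambda)_+$. The associated operator $Z$ has already been computed in the preamble to the corollary, giving
\[
Z=(1-\lambda)C(\Phi_2)+(\Delta_\lambda)_+,
\]
(up to a harmless overall factor of $\dim(\Ha)^{-1}$, which does not affect proportionality). Since $\Phi_2$ is a channel, $\Tr_\Ka C(\Phi_2)=I_\Ha$, so
\[
\Tr_\Ka Z=(1-\lambda)I+\Tr_\Ka(\Delta_\lambda)_+,
\]
and condition (ii), namely $\Tr_\Ka Z\propto\rho_2\propto I$, reduces to $\Tr_\Ka(\Delta_\lambda)_+\propto I$.

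The remaining step is to recast this in terms of $|\Delta_\lambda|$. Using $C(\Phi_1),C(\Phi_2)$ and the trace-preserving property one sees that
\[
\Tr_\Ka\Delta_\lambda=\lambda I-(1-\lambda)I=(2\lambda-1)I
\]
is already a scalar multiple of $I$. Combined with the decompositions $\Delta_\lambda=(\Delta_\lambda)_+-(\Delta_\lambda)_-$ and $|\Delta_\lambda|=(\Delta_\lambda)_++(\Delta_\lambda)_-$, this shows that proportionality of any one of $\Tr_\Ka(\Delta_\lambda)_+$, $\Tr_\Ka(\Delta_\lambda)_-$ and $\Tr_\Ka|\Delta_\lambda|$ to $I$ implies proportionality of the other two. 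Hence condition (ii) is equivalent to \eqref{eq:MEI}, which completes the proof. There is no genuine obstacle; the only point requiring a moment's care is this last equivalence, which hinges on the fact that $\Tr_\Ka\Delta_\lambda$ is scalar.
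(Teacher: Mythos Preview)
Your proof is correct and follows essentially the same route as the paper: reduce via the remarks after Corollary~\ref{coro:OptimCond} to the condition $\Tr_\Ka(\Delta_\lambda)_+\propto I$, then use that $\Tr_\Ka\Delta_\lambda\propto I$ to pass to $\Tr_\Ka|\Delta_\lambda|\propto I$. The paper does this last step with the single identity $(\Delta_\lambda)_+=\tfrac12(\Delta_\lambda+|\Delta_\lambda|)$, while you unpack it into the pair of decompositions for $\Delta_\lambda$ and $|\Delta_\lambda|$; these are of course equivalent.
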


\begin{proof}
By the remarks below Corollary \ref{coro:OptimCond}, such a scheme exists  if and only if $\Tr_\Ka Z\propto I$, equivalently, $\ptr_\Ka ( \Delta_\lambda )_+ \propto I$. Since we always have  $\ptr_\Ka \Delta_\lambda \propto I$ and 
\[
(\Delta_\lambda)_+=\frac12 (\Delta_\lambda+|\Delta_\lambda|),
\]
 the condition can be rewritten as stated.
\end{proof}

The following corollary describes the upper bound of the optimal probability.
\begin{coro}\label{coro:bounds}
We have the following bounds
\begin{align*}
\pmei \leq \popt \leq \dfrac{1}{2} \left( 1 + \Vert \ptr_\Ka | \Delta_\lambda | \Vert \right).
\end{align*}
If the condition \eqref{eq:MEI} is satisfied, the inequalities become equalities.
\end{coro}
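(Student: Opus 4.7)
The plan is to combine Theorem \ref{thm:ErrorEstimate} with the explicit form of $Z$ for two channels that was already derived just before Corollary \ref{coro:maxent}.

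First, the lower bound $\pmei \le \popt$ is immediate from the definitions: $\pmei$ is by construction a success probability achievable by a particular scheme (maximally entangled input plus Helstrom POVM), hence a feasible value in the optimization defining $\popt$.

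For the upper bound, I would invoke Theorem \ref{thm:ErrorEstimate}. The ensemble $\{\lambda_i, \dim(\Ha)^{-1}C(\Phi_i)\}$ is a two-state ensemble, so by Helstrom's theorem its optimal POVM is $\{M, I-M\}$ with $M$ the projection onto the positive part of $\Delta_\lambda$, and the associated operator has already been computed as
\[
Z = (1-\lambda)C(\Phi_2) + (\Delta_\lambda)_+.
\]
The central step is then to reduce $\ptr_\Ka Z$ to a function of $\ptr_\Ka|\Delta_\lambda|$. Using trace-preservation $\ptr_\Ka C(\Phi_i) = I$, the identity $(\Delta_\lambda)_+ = \tfrac{1}{2}(\Delta_\lambda + |\Delta_\lambda|)$, and $\ptr_\Ka \Delta_\lambda = (2\lambda-1)I$, a direct computation gives
\[
\ptr_\Ka Z = \tfrac{1}{2}I + \tfrac{1}{2}\ptr_\Ka |\Delta_\lambda|.
\]
Since $\ptr_\Ka |\Delta_\lambda|$ is positive, the operator norm is additive on this sum of commuting positive operators, yielding $\lVert \ptr_\Ka Z\rVert = \tfrac{1}{2}(1 + \lVert \ptr_\Ka |\Delta_\lambda|\rVert)$, and Theorem \ref{thm:ErrorEstimate} gives the claimed upper bound.

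For the equality statement under \eqref{eq:MEI}: if $\ptr_\Ka|\Delta_\lambda| = cI$ for some $c\ge 0$, taking the trace yields $c = \dim(\Ha)^{-1}\Tr|\Delta_\lambda|$, so the upper bound becomes $\tfrac{1}{2}(1 + \dim(\Ha)^{-1}\Tr|\Delta_\lambda|)$, which is precisely the expression for $\pmei$ derived above Corollary \ref{coro:maxent}. Hence both inequalities collapse. I do not anticipate any real obstacle here; the only care point is bookkeeping between the normalization of the states $\dim(\Ha)^{-1}C(\Phi_i)$ in Theorem \ref{thm:ErrorEstimate} and the un-normalized operator $Z$, but the expressions in Section \ref{sec:two} are already written in the compatible convention.
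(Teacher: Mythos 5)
Your proposal is correct and follows essentially the same route as the paper: the bounds are Theorem \ref{thm:ErrorEstimate} applied to the two-channel case, where the paper leaves implicit the computation $\ptr_\Ka Z = \tfrac12\left(I + \ptr_\Ka|\Delta_\lambda|\right)$ (via $\ptr_\Ka C(\Phi_i)=I$ and $\ptr_\Ka\Delta_\lambda=(2\lambda-1)I$) that you spell out, and the equality case is handled identically by noting that \eqref{eq:MEI} forces $\Vert\ptr_\Ka|\Delta_\lambda|\Vert=\dim(\Ha)^{-1}\Tr|\Delta_\lambda|$.
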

\begin{proof}
We only have to note that if the \eqref{eq:MEI} condition is satisfied, then $\dim(\Ha)^{-1} \Tr \vert \Delta_\lambda \vert = \Vert \ptr_\Ka \vert \Delta_\lambda \vert \Vert$. 
\end{proof}

\begin{rem}\label{rem:bounds} It is well known that $p_{opt}$ is related to the diamond norm as $p_{opt}=\tfrac12(1+\lVert \Phi_\lambda \rVert_\diamond)$. To our knowledge, the only known bounds on the diamond norm in terms of the Choi matrices are the following
\begin{equation} \label{eq:coro-bound-loose}
\dim(\Ha)^{-1}\lVert C(\Phi_\lambda)\rVert_1\le\lVert \Phi_\lambda \lVert_\diamond\le \lVert C(\Phi_\lambda)\rVert_1, 
\end{equation}
(see e.g. \cite[Lemma 6]{bph2015generic}) which is quite coarse. As in Remark \ref{rem:epsilon}, we obtain from Corollary \ref{coro:bounds} the following new upper 
bound:
\begin{align}
\lVert\Phi_\lambda \lVert_\diamond &\le \lVert \ptr_\Ka|C(\Phi_\lambda)|\rVert \nonumber \\
&\le(1+\epsilon')\dim(\Ha)^{-1}\lVert C(\Phi_\lambda)\rVert_1 , \label{eq:coro-bound-epsilon}
\end{align}
where $\epsilon'=\lVert \tfrac {\dim(\Ha)}{\lVert C(\Phi_\lambda)\rVert_1} \ptr_\Ka|C(\Phi_\lambda)|-I\rVert$.  This shows that if \eqref{eq:MEI} is nearly satisfied, the above bounds are quite precise.

To show that the upper bound given by \eqref{eq:coro-bound-epsilon} is better than the bound $\eqref{eq:coro-bound-loose}$ we will show that in general
\begin{equation*}
(1+\epsilon')\dim(\Ha)^{-1} \le 1.
\end{equation*}
We have
\begin{align*}
\epsilon' &=  \dim(\Ha) \left\Vert \dfrac {\ptr_\Ka|C(\Phi_\lambda)|}{\lVert C(\Phi_\lambda)\rVert_1} - \dfrac{1}{\dim(\Ha)} I \right\Vert \\
&\leq \dim(\Ha) -1
\end{align*}
since $\frac {\ptr_\Ka|C(\Phi_\lambda)|}{\lVert C(\Phi_\lambda)\rVert_1}$ is a state. This implies the above inequality. We also see that this inequality is strict unless $\ptr_\Ka|C(\Phi_\lambda)|$ is of rank 1. 
\end{rem}

\section{Applications}
We apply the results of the previous section to the problem of discrimination of covariant channels, unitary channels, qubit channels and measurements. In the case of covariant channels and unital qubit channels, similar results were obtained 
in \cite{matsumoto} for more general decision problems on families of quantum channels.

\subsection{Covariant channels}
Let $\mathcal U(\Ha)$ denote the unitary group of $\Ha$. For 
$U\in \mathcal U(\Ha)$, let  
\[
Ad_U(A):=UAU^*,\quad A\in B(\Ha).
\]
Let $G$ be a group and let $g\mapsto U_g\in \mathcal U(\Ha)$ and $g\mapsto V_g\in \mathcal U(\Ka)$ be  
unitary representations. Assume that $\Phi_1$ and $\Phi_2$ are covariant channels, that is,
\begin{equation} \label{eq:covariant}
\Phi_i\circ Ad_{U_g}=Ad_{V_g}\circ \Phi_i,\qquad i=1,2,\ g\in G.
\end{equation}

Irreducibility of $g\mapsto U_g$ plays a strong role, as we will see. In this case, the only non-zero projection that commutes with all $U_g$ is $I$,  see e.g. \cite{barut-raczka}.

\begin{prop}
Let $\Phi_1, \Phi_2$ be channels satisfying \eqref{eq:covariant}. Assume that the representation $g\mapsto U_g$ is irreducible.  Then the condition \eqref{eq:MEI} is satisfied for any $\lambda\in (0,1)$. 
\end{prop}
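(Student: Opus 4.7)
The plan is to exploit Schur's lemma after recasting the covariance condition \eqref{eq:covariant} as an invariance property of the Choi matrices, which then transfers to $|\Delta_\lambda|$ and survives the partial trace.

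First, I would translate \eqref{eq:covariant} into a commutation relation involving $C(\Phi_i)$. Using the standard identity $(U \otimes I)|\psi_\Ha\> = (I \otimes U^T)|\psi_\Ha\>$, a short calculation gives $C(\Phi_i \circ Ad_{U_g}) = (I \otimes U_g^T) C(\Phi_i) (I \otimes \bar U_g)$ and $C(Ad_{V_g} \circ \Phi_i) = (V_g \otimes I) C(\Phi_i) (V_g^* \otimes I)$. Setting $W_g := V_g \otimes \bar U_g$, which is a unitary representation since $g\mapsto \bar U_g$ is one, the covariance condition then rearranges to $W_g\, C(\Phi_i)\, W_g^* = C(\Phi_i)$ for $i=1,2$ and all $g\in G$. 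By linearity, $W_g$ also commutes with $\Delta_\lambda$, and then by functional calculus the same is true of $|\Delta_\lambda|$.

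Next I would push this invariance through the partial trace. The identity $\ptr_\Ka((V \otimes U) A (V \otimes U)^*) = U (\ptr_\Ka A) U^*$, applied to $A = |\Delta_\lambda|$ with $V = V_g$, $U = \bar U_g$, yields
\[
\bar U_g\, (\ptr_\Ka |\Delta_\lambda|)\, \bar U_g^* = \ptr_\Ka |\Delta_\lambda|, \qquad \forall g\in G.
\]
Since $g\mapsto \bar U_g$ is irreducible whenever $g\mapsto U_g$ is (invariant subspaces of one correspond to invariant subspaces of the other via entrywise conjugation), Schur's lemma forces $\ptr_\Ka |\Delta_\lambda|$ to be a scalar multiple of the identity on $\Ha$. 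This is precisely the condition \eqref{eq:MEI}, and the conclusion then follows from Corollary \ref{coro:maxent}.

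The main obstacle is purely the bookkeeping with transposes and complex conjugates inherent in the Choi picture; once the correct intertwiner $W_g = V_g \otimes \bar U_g$ is identified and shown to fix $C(\Phi_i)$, the reduction to Schur's lemma is automatic and does not use anything specific to the two-channel case beyond self-adjointness of $\Delta_\lambda$.
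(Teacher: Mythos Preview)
Your proof is correct and follows essentially the same route as the paper's: both arguments use the identity $(U\otimes I)|\psi_\Ha\>=(I\otimes U^T)|\psi_\Ha\>$ to show that $\ptr_\Ka|\Delta_\lambda|$ is invariant under conjugation by the complex-conjugate (equivalently, transpose) of the representation $U_g$, and then invoke Schur's lemma. The only difference is packaging: you first isolate the intertwiner $W_g=V_g\otimes\bar U_g$ and derive everything from $[W_g,C(\Phi_i)]=0$, whereas the paper runs the chain of equalities directly on $\ptr_\Ka|\Delta_\lambda|$; your version is also a bit more explicit about why irreducibility of $g\mapsto U_g$ transfers to $g\mapsto\bar U_g$, a step the paper leaves implicit.
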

\begin{proof}
Let $U^t$ denote the transpose of $U$ with respect to the fixed basis 
$\{|i\>\}$. Let $\lambda\in (0,1)$ and let  $\Delta_\lambda=C(\Phi_\lambda)$ be as in \eqref{eq:DeltaLambdaDef}. We will prove the proposition by showing that $\Tr_\Ka|\Delta_\lambda|$ is invariant under $Ad_{U^t_g}$ and by the discussion above this implies that $\Tr_\Ka|\Delta_\lambda| \propto I$. For every $g\in G$ we have
\begin{align*}
&Ad_{U_g^t}(\Tr_\Ka|\Delta_\lambda|) = \\
&=\Tr_\Ka(id\otimes Ad_{U^t_g})(|\Delta_\lambda|)\\
&=\Tr_\Ka|(\Phi_\lambda\otimes Ad_{U^t_g})(|\psi_\Ha\>\<\psi_\Ha|)|\\
&=\Tr_\Ka|(\Phi_\lambda\circ Ad_{U_g}\otimes id)(|\psi_\Ha\>\<\psi_\Ha|)|\\
&=\Tr_\Ka (Ad_{V_g}\otimes id)(|\Delta_\lambda|)=\Tr_\Ka|\Delta_\lambda|.
\end{align*}

\end{proof}

In case the representation $U$ is reducible, let us sketch an upper bound of the optimal probability. By the previous proof, we have  $Ad_{U_g^t}(\Tr_\Ka|\Delta_\lambda|)$, hence 
\begin{equation*}
\ptr_\Ka | \Delta_\lambda | = \sum\limits_i k_i P^t_i
\end{equation*}
where $k_i \in \mathbb{R}$ and $P_i$ are projections onto the subspaces of the irreducible representation, orthogonal sum of which is $U$. Let $t_i = \Tr P_i$ then $\pmei = \frac{1}{2} (1 + \dim(\Ha)^{-1} \sum_i t_i k_i)$ and we have
\begin{align*}
\popt \leq \dfrac{1}{2} \left( 1 + \max_i k_i \right).
\end{align*}

\subsection{Qubit channels}

Let $\Ha=\Ka=\mathbb C^2$ and let us denote $\psi_\Ha=:\psi_2$. Let $\Gamma(X)=(\Tr X) I - X^t$ be the Werner-Holevo channel, where $X^t$ denotes the transpose map with respect to the canonical basis $|0\>, |1\>$.
Then $\Gamma$ is a unitary channel, given by the unitary $U$ such that 
\[
U|0\>=-|1\>, \qquad U|1\>=|0\>.
\]
It can be easily checked that $\Gamma\circ\Gamma=id$ and  $(id \otimes\Gamma)(|\psi_2\>\<\psi_2|)=(\Gamma\otimes id)(|\psi_2\>\<\psi_2|)$. If $\phi: B(\mathbb C^2)\to B(\mathbb C^2)$ is a linear map such that there is some $a\in \mathbb R$, satisfying
\begin{equation}\label{eq:tracepres}
\Tr\phi(X)=a\Tr X,\qquad X\in B(\mathbb C^2),
\end{equation}
 then 
\[
\phi\circ\Gamma(X)=\Gamma\circ\phi^t(X)+(\Tr X)(\phi(I)-aI),
\]
where $\phi^t(X)=\phi(X^t)^t$. Moreover, for a self-adjoint $X\in B(\mathbb C^2)$, $\Gamma(X)=X^t$ if and only if $X\propto I$. 

Let $\Phi_1$ and $\Phi_2$ be two qubit channels  and let 
$\Delta_\lambda=C(\Phi_\lambda)$ for $\lambda\in (0,1)$ as before. By the previous remarks, the condition \eqref{eq:MEI} 
is equivalent to $\Gamma(\ptr_\Ka|\Delta_\lambda|) =(\ptr_\Ka|\Delta_\lambda|)^t$. We are now going to investigate this equality.

Note that   $\Phi_\lambda$ satisfies (\ref{eq:tracepres}) with $a=2\lambda-1$. Since $\Gamma$ is a unitary channel, we have 
\begin{align*}
\Gamma(\ptr_\Ka|\Delta_{\lambda}|)=&\ptr_\Ka(id\otimes \Gamma)(|\Delta_\lambda|) \\
=&\ptr_\Ka|(\Phi_\lambda\otimes\Gamma)(|\psi_2\>\<\psi_2|)|.
\end{align*}
 Using further properties of $\Gamma$ and $\Phi_\lambda$, we get
\begin{align*}
\ptr_\Ka&|(\Phi_\lambda\otimes\Gamma)(|\psi_2\>\<\psi_2|)\\
=&\ptr_\Ka|(\Phi_\lambda\circ\Gamma\otimes id)(|\psi_2\>\<\psi_2|)|\\
=&\ptr_\Ka|(\Gamma\circ \Phi_\lambda^t\otimes id)(|\psi_2\>\<\psi_2|) \\
&+(\Gamma \circ \Gamma)(\Phi_\lambda(I)-(2\lambda-1)I)\otimes I|\\
=& \ptr_\Ka|(\Phi_\lambda^t\otimes id)(|\psi_2\>\<\psi_2|) \\
&+((2\lambda-1)I-\Phi_\lambda^t(I))\otimes I|\\
=&[\ptr_\Ka|\Delta_\lambda+((2\lambda-1)I-\Phi_\lambda(I))\otimes I|]^t.
\end{align*}
The last equality follows from the fact that $C(\Phi_\lambda^t)=C(\Phi_\lambda)^{t\otimes t}$, where $t\otimes t$ denotes transpose with respect to the product basis $|i\>\otimes |j\>$, and that $\ptr_\Ka|X^{t\otimes t}|=(\ptr_\Ka |X|)^t$ for any 
 $X=X^*\in B(\Ce^2\otimes\Ce^2)$.
Thus we have proved:
\begin{prop} For a pair of qubit channels, the condition \eqref{eq:MEI} holds  if and only if 
\[
\ptr_\Ka|\Delta_\lambda+((2\lambda-1)I-\Phi_\lambda(I))\otimes I| = 
\ptr_\Ka|\Delta_\lambda|.
\]
\end{prop}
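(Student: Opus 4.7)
The plan is to show that the MEI condition is equivalent to the claimed identity by translating ``proportional to $I$'' into a symmetry statement under the Werner--Holevo channel $\Gamma$, and then reducing that symmetry statement by a sequence of rewrites that exactly produces the right-hand side. Concretely, since $\ptr_\Ka|\Delta_\lambda|$ is a self-adjoint element of $B(\mathbb C^2)$, the criterion recorded just above the proposition (that $\Gamma(X)=X^t$ iff $X\propto I$ for such $X$) turns \eqref{eq:MEI} into the single scalar equation
\[
\Gamma(\ptr_\Ka|\Delta_\lambda|)=(\ptr_\Ka|\Delta_\lambda|)^t.
\]
The content of the proof is then the simplification of the left-hand side.

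For that simplification I would follow the chain of identities already assembled in the paragraphs preceding the proposition. First, pull $\Gamma$ inside the partial trace: since $\Gamma$ acts on the $\Ha$-factor and $\ptr_\Ka$ traces out the other side, $\Gamma\circ\ptr_\Ka=\ptr_\Ka\circ(id\otimes\Gamma)$. Second, move $\Gamma$ through the absolute value, which is legitimate because $\Gamma$ is implemented by a unitary conjugation; this rewrites the argument as $\ptr_\Ka|(\Phi_\lambda\otimes\Gamma)(|\psi_2\>\<\psi_2|)|$. Third, use the symmetry $(id\otimes\Gamma)(|\psi_2\>\<\psi_2|)=(\Gamma\otimes id)(|\psi_2\>\<\psi_2|)$ to send $\Gamma$ across to the channel side, producing $(\Phi_\lambda\circ\Gamma\otimes id)(|\psi_2\>\<\psi_2|)$. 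Fourth, apply the commutation identity $\phi\circ\Gamma(X)=\Gamma\circ\phi^t(X)+(\Tr X)(\phi(I)-aI)$ with $\phi=\Phi_\lambda$ and $a=2\lambda-1$ (valid because $\Phi_\lambda$ satisfies \eqref{eq:tracepres} with this $a$); evaluating the $(\Tr X)$-correction on $|\psi_2\>\<\psi_2|$ gives an additional summand $(\Phi_\lambda(I)-(2\lambda-1)I)\otimes I$. Fifth, factor a single $(\Gamma\otimes id)$ out in front, absorb it via $\Gamma\circ\Gamma=id$, and compute $\Gamma$ of the correction: using $\Tr\Phi_\lambda(I)=2(2\lambda-1)$ this term becomes $((2\lambda-1)I-\Phi_\lambda^t(I))\otimes I$. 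The outer $(\Gamma\otimes id)$ then disappears because $\Gamma$ acts on $\Ka$ and we take $\ptr_\Ka$.

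To conclude, I would identify the remaining expression with a transpose: using $C(\Phi_\lambda^t)=C(\Phi_\lambda)^{t\otimes t}$ and the elementary identity $\ptr_\Ka|X^{t\otimes t}|=(\ptr_\Ka|X|)^t$, the simplified left-hand side becomes $[\ptr_\Ka|\Delta_\lambda+((2\lambda-1)I-\Phi_\lambda(I))\otimes I|]^t$. Setting this equal to the right-hand side $(\ptr_\Ka|\Delta_\lambda|)^t$ and transposing back yields the stated equivalence.

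The main obstacle is simply bookkeeping, not conceptual: one has to juggle several transposes ($t$ versus $t\otimes t$), use $\Gamma\circ\Gamma=id$ in two distinct roles (once to insert $\Gamma$ so that it can be factored, once to eliminate it), and carry along the scalar correction term that appears because $\Phi_\lambda$ is only scaled-trace-preserving with $a=2\lambda-1$ rather than trace-preserving. The conceptual takeaway is that for qubits the MEI condition is encoded entirely by an invariance of $\ptr_\Ka|\Delta_\lambda|$ under the Werner--Holevo involution, which only fails by the explicit additive correction $((2\lambda-1)I-\Phi_\lambda(I))\otimes I$ inside the absolute value.
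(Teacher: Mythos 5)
Your proposal is correct and follows essentially the same route as the paper: reduce \eqref{eq:MEI} to $\Gamma(\ptr_\Ka|\Delta_\lambda|)=(\ptr_\Ka|\Delta_\lambda|)^t$, commute $\Gamma$ through the partial trace and the absolute value, shift it onto the channel via the symmetry of $|\psi_2\>$, apply the commutation identity with $a=2\lambda-1$, factor and discard the outer $\Gamma\otimes id$ under $\ptr_\Ka$, and finish with $C(\Phi_\lambda^t)=C(\Phi_\lambda)^{t\otimes t}$ and $\ptr_\Ka|X^{t\otimes t}|=(\ptr_\Ka|X|)^t$. The bookkeeping of the trace-zero correction term and the two uses of $\Gamma\circ\Gamma=id$ is handled exactly as in the paper, so there is nothing to add.
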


In particular, this is true if $\Phi_\lambda (I)=(2\lambda-1)I$. If both channels are unital, this holds for any $\lambda$, hence maximally entangled input state is optimal,  as it was already observed in \cite{matsumoto} and in \cite{sacchi_pauli} in the case of qubit Pauli channels. If $\lambda = \frac{1}{2}$, then the condition \eqref{eq:MEI} is satisfied if $\Phi_1 (I) = \Phi_2(I)$, even if the channels are not unital.

\subsection{Unitary channels}
Let $U,V\in \mathcal U(\Ha)$ and let $\Phi_1=Ad_U$, $\Phi_2=Ad_V$ be the corresponding unitary channels. As it was proved in \cite{darianopp_unitary}, it is not necessary to use entangled inputs for optimal discrimination of two unitary channels.  Nevertheless, it is an interesting question whether a maximally entangled state is also optimal, this will be addressed in this paragraph.

     Let $W = U V^*$. Since any input state $\rho$ may be replaced by $(V^*\otimes I)\rho(V\otimes I)$, it is clear that discrimination of $Ad_U$ and $Ad_V$ is equivalent to discrimination of $Ad_W$ and the identity channel, and that a maximally entangled input state is optimal for one problem if and only if it is optimal for the other. We may therefore assume that $\Phi_1=Ad_W$  and $\Phi_2=id$.
Since the unitaries are given only up to a phase, we may also assume  that  $\Tr W \in \mathbb{R}$.
Put
\begin{align*}
|\phi\>&=\sum_i W|i\>\otimes |i\>, \\
|\psi\>&=\sum_i |i\>\otimes |i\>,
\end{align*}
so that $|\phi\>\<\phi|=C_1$, $|\psi\>\<\psi|=C_2$ are the Choi matrices of the unitary channel $Ad_W$ and identity. By the results of the Appendix it is  clear that $\ptr_1|\Delta_\lambda|\propto I$ if and only if $z\ptr_1(|\phi\>\<\psi|+|\psi\>\<\phi|)\propto I$, where 
\begin{equation*}
z=\<\phi,\psi\>=\Tr  W^*=\Tr W
\end{equation*}
and
\begin{align*}
\ptr_1|\phi\>\<\psi|=   W^t.
\end{align*}
Since the transpose is a linear map and $I^t=I$, we see that \eqref{eq:MEI} is equivalent to 
\begin{equation*}
(\Tr W) (W+W^*)\propto I.
\end{equation*}
If $\Tr W = 0$ this condition is obviously satisfied. If $\Tr W \neq 0$ it is equivalent to
\begin{equation} \label{eq:unit-cond}
W + W^* \propto I.
\end{equation}
The unitary $W$ has a spectral decomposition
\begin{equation*}
W = \sum_{j=1}^{\dim(\Ha)} ( \cos(\alpha_j) + i \sin(\alpha_j) ) | \xi_j \>\< \xi_j |,
\end{equation*}
where  $\alpha_1, \ldots, \alpha_{\dim(\Ha)} \in [-\pi, \pi]$. 
From the condition \eqref{eq:unit-cond} we see that $\cos(\alpha_j)$ must be constant with respect to $j$, or in other words there must exist $\beta \in [0, \pi]$ and numbers $\eta_j \in \{0, 1\}$ such that $\alpha_j = (-1)^{\eta_j} \beta$ for every $j$. By the assumption  $\Tr W \in \mathbb{R}$ we must have
\begin{equation*}
\sum_{j=1}^{\dim(\Ha)} \sin(\alpha_j) = \sin(\beta) \sum_{j=1}^{\dim(\Ha)} (-1)^{\eta_j} = 0.
\end{equation*}
This implies that either $\beta = 0$ and $W = I$, or $W$ has exactly two eigenvalues, each of the same multiplicity. The fact that in our calculation the eigenvalues are complex conjugate of each other is simply caused by the choice $\Tr W \in \mathbb{R}$ and does not have to be generally required. We have proved the following:
\begin{prop} \label{prop:unit-chan} Let $\Phi_1=Ad_U$, $\Phi_2=Ad_V$ be unitary channels.
Put $W=U V^*$ and let $\lambda\in (0,1)$. Then \eqref{eq:MEI} holds if and only if either $\Tr W=0$ or $W$ has at most two different eigenvalues, each of the same multiplicity.
\end{prop}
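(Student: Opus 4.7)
The plan is to reduce the pair $(Ad_U, Ad_V)$ to the pair $(Ad_W, id)$ with $W = UV^*$, and then translate the \eqref{eq:MEI} condition into a spectral constraint on $W$. Since replacing the input state $\rho$ by $(V^*\otimes I)\rho(V\otimes I)$ converts a scheme for $(Ad_U, Ad_V)$ into a scheme for $(Ad_W, id)$ with the same success probabilities, and since this substitution sends maximally entangled states to maximally entangled states, it suffices to treat $\Phi_1 = Ad_W$, $\Phi_2 = id$. Absorbing a global phase into $W$ lets me further assume $\Tr W \in \mathbb{R}$, which will matter when translating the final condition into eigenvalue data.

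Next I introduce $|\phi\> = (W\otimes I)|\psi_\Ha\>$ and $|\psi\> = |\psi_\Ha\>$, so that the Choi matrices are $|\phi\>\<\phi|$ and $|\psi\>\<\psi|$, and $\Delta_\lambda$ is a self-adjoint operator supported on the at-most-two-dimensional subspace spanned by $\{|\phi\>,|\psi\>\}$. I would then invoke the rank-two absolute-value formulas from the Appendix to compute $|\Delta_\lambda|$ in this basis. Using $\ptr_\Ka|\phi\>\<\phi| = \ptr_\Ka|\psi\>\<\psi| = I$, the contribution of the diagonal terms of $|\Delta_\lambda|$ to $\ptr_\Ka|\Delta_\lambda|$ is automatically a multiple of $I$, so \eqref{eq:MEI} collapses to the condition
\[
z\,\ptr_\Ka\bigl(|\phi\>\<\psi|+|\psi\>\<\phi|\bigr)\propto I,\qquad z = \<\phi|\psi\> = \Tr W.
\]

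A direct calculation in the canonical basis gives $\ptr_\Ka|\phi\>\<\psi| = W^t$, so the condition becomes $(\Tr W)(W+W^*)\propto I$. If $\Tr W = 0$ this is immediate, giving the first alternative. Otherwise $W + W^*\propto I$, and taking a spectral decomposition $W = \sum_j e^{i\alpha_j}|\xi_j\>\<\xi_j|$ forces all $\cos(\alpha_j)$ to coincide, so $\alpha_j \in \{\beta,-\beta\}$ for some $\beta\in[0,\pi]$. The reality assumption $\Tr W\in\mathbb{R}$ then imposes $\sin(\beta)\sum_j(-1)^{\eta_j}=0$, giving either $\beta=0$ (hence $W=I$, a single eigenvalue) or equal multiplicities of $e^{i\beta}$ and $e^{-i\beta}$.

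The main obstacle is the rank-two absolute-value computation: one must check carefully that after expressing $|\Delta_\lambda|$ in the orthonormalized span of $\{|\phi\>,|\psi\>\}$, the only piece whose partial trace is not already a scalar multiple of $I$ is the off-diagonal cross term weighted by $z$. Once this reduction is in place, the remaining steps -- the identification $\ptr_\Ka|\phi\>\<\psi| = W^t$, the case split on $\Tr W$, and the spectral analysis of $W+W^*\propto I$ under the reality constraint -- are all routine.
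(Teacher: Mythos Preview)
Your proposal is correct and follows essentially the same route as the paper's proof: the same reduction to $(Ad_W,id)$ with the phase normalization $\Tr W\in\mathbb R$, the same use of the Appendix rank-two formula to isolate the cross term $z(|\phi\>\<\psi|+|\psi\>\<\phi|)$, the identification $\ptr_\Ka|\phi\>\<\psi|=W^t$, and the same spectral analysis of $W+W^*\propto I$. The only difference is cosmetic---you spell out explicitly why the $P_\phi$ and $P_\psi$ terms already have partial trace proportional to $I$, whereas the paper simply asserts that the reduction is ``clear'' from the Appendix.
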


Note that if $\dim(\Ha)$ is odd, MEI holds iff $\Tr W=\<\psi,\phi\>=0$, in which case the two channels are perfectly distinguishable.

\subsection{Simple projective measurements}

A special case of a channel is a measurement, which is given by a POVM $M=\{M_1,\dots,M_m\}$. One shot discrimination of quantum measurements was investigated in \cite{sed_zim}, where it was proved that entangled input states are necessary in some cases.

The corresponding channel
$\Phi_M: B(\Ha)\to B(\mathbb C^m)$ is defined as 
\[
A\mapsto \sum_i (\Tr M_iA) |i\>\<i|
\]
and the Choi matrix has the form $C(\Phi_M)= \sum_i |i\>\<i|\otimes M_i^t$. Let $\Phi_1=\Phi_{M}$, $\Phi_2=\Phi_N$ for two POVM's  
$M$, $N$ with $m$ outcomes. In this case, the condition MEI has the form
\begin{equation*}
\sum_i|\lambda M_i-(1-\lambda)N_i|\propto I.
\end{equation*}

We will further investigate simple projective measurements.  Let $\{|\xi_i\>\}$ and $\{|\eta_i\>\}
$  be two orthonormal bases in $\Ha$ an let $M_i=P_{\xi_i}:=|\xi_i\>\<\xi_i|$ and $N_i=P_{\eta_i}:=|\eta_i\>\<\eta_i|$. We will also assume that $\lambda=\tfrac12$. 

\begin{prop}
Assume that $P_{\xi_j}=P_{\eta_j}$ for some $j$. Then the condition MEI is satisfied if and only if $M=N$.
\end{prop}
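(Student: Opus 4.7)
The plan is to start from the explicit form of the \eqref{eq:MEI} condition at $\lambda=\tfrac12$, use the standard spectral description of the difference of two rank-one projections to write each summand in closed form, and then exploit the hypothesis $P_{\xi_j}=P_{\eta_j}$ to pin down the proportionality constant.

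First I would dispose of the easy direction: if $M=N$ then $\sum_i|P_{\xi_i}-P_{\eta_i}|=0$, which is trivially proportional to the identity. For the converse, the key ingredient is the following elementary fact. Setting
\[
s_i:=\sqrt{1-|\<\xi_i,\eta_i\>|^2}\ge 0
\]
and letting $Q_i$ be the orthogonal projection onto $\mathrm{span}(|\xi_i\>,|\eta_i\>)$, the self-adjoint operator $P_{\xi_i}-P_{\eta_i}$ has spectrum $\{+s_i,-s_i,0,\dots,0\}$ with both nonzero eigenvectors in the range of $Q_i$. Consequently
\[
|P_{\xi_i}-P_{\eta_i}|=s_i\,Q_i,
\]
and the MEI condition (after discarding the overall factor $\tfrac12$) becomes
\[
\sum_{i=1}^{\dim(\Ha)} s_i\,Q_i = c\,I
\]
for some scalar $c\ge 0$, the nonnegativity being forced by positivity of the left-hand side.

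Next I would feed in the hypothesis $P_{\xi_j}=P_{\eta_j}$ by choosing phases so that $|\xi_j\>=|\eta_j\>$. Then $s_j=0$, and for every $i\neq j$ both $|\xi_i\>$ and $|\eta_i\>$ are orthogonal to $|\xi_j\>=|\eta_j\>$ by orthonormality of each basis; hence $Q_i|\xi_j\>=0$. Applying the displayed identity to $|\xi_j\>$ therefore gives $c\,|\xi_j\>=0$, so $c=0$ and $\sum_i s_iQ_i=0$. Since every summand is a positive operator, this forces $s_iQ_i=0$ for each $i$, and because $Q_i$ is a nonzero projection whenever $|\xi_i\>$ and $|\eta_i\>$ are not proportional, we conclude $s_i=0$ and $P_{\xi_i}=P_{\eta_i}$ for all $i$, i.e., $M=N$.

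I do not expect any serious obstacle; the only pieces of bookkeeping are the rank-two spectral description of $P_\xi-P_\eta$ and the orthogonality relations $\<\eta_i,\xi_j\>=\<\eta_i,\eta_j\>=0$ for $i\neq j$, both of which are routine. The argument crucially uses $\lambda=\tfrac12$ only insofar as it makes each summand of the form $|P_{\xi_i}-P_{\eta_i}|$ rather than $|\lambda P_{\xi_i}-(1-\lambda)P_{\eta_i}|$, which would have nonzero diagonal contributions even at the coincidence index $j$.
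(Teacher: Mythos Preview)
Your proof is correct and follows essentially the same line as the paper's: both arguments observe that the $j$-th summand vanishes, that each remaining summand $|P_{\xi_i}-P_{\eta_i}|$ is supported in $\mathrm{span}(\xi_i,\eta_i)\perp\xi_j$ (you via the explicit formula $s_iQ_i$, the paper via the bound $|P_{\xi_i}-P_{\eta_i}|\le I-P_{\xi_j}$), so the sum annihilates $|\xi_j\>$, forcing the proportionality constant to be zero and hence $M=N$. Your version is slightly more explicit in computing the spectral form, but the underlying idea is identical.
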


\begin{proof} In this case, the condition is
\[
\sum_i |P_{\xi_i}-P_{\eta_i}|=\sum_{i\ne j}|P_{\xi_i}-P_{\eta_i}|\propto I.
\]
Since $|P_{\xi_i}-P_{\eta_i}|\le I-P_{\xi_j}=I-P_{\eta_j}$  
for all $i\ne j$, this can be true only if $\sum |P_{\xi_i}-P_{\eta_i}|=0$, that is, $M=N$.

\end{proof}

From now on we will always  assume that $|\<\xi_i,\eta_i\>|<1$, equivalently, 
$P_{\xi_i}\ne P_{\eta_i}$, for all $i$.   Then $|P_{\xi_i}-P_{\eta_i}|=c_iP_{\xi_i,\eta_i}$, where $c_i=(1-|\<\xi_i,\eta_i\>|^2)^{1/2}$ and $P_{\xi_i,\eta_i}$ is the projection onto $\mathrm{span}\{\xi_i,\eta_i\}$, so that the condition MEI becomes
\begin{equation}\label{eq:simplemea}
\sum_i c_iP_{\xi_i,\eta_i}=dI, \quad d=2\dim(\Ha)^{-1}\sum_ic_i.
\end{equation}
Note that if $\dim(\Ha)=2$, then $P_{\xi_i,\eta_i}=I$, so that the condition trivially holds. In this case, $\Phi_M$ and $\Phi_N$  are unital qubit channels, so that this follows also 
by previous  results.
Put
\[
|\xi_i^\perp\>=c_i^{-1}(|\eta_i\>-\<\xi_i,\eta_i\>|\xi_i\>), \quad i=1,\dots,\dim(\Ha).
\]
Then $P_i:=P_{\xi_i,\eta_i}= P_{\xi_i,\xi_i^\perp}=P_{\xi_i}+P_{\xi_i^\perp}$. The condition (\ref{eq:simplemea}) is equivalent to
\[
d|\xi_j\>=\sum_ic_iP_i|\xi_j\>=c_j|\xi_j\>+\sum_{i\ne j}c_i\<\xi_i^\perp,\xi_j\>|\xi_i^\perp\>
\]
for all $j$, or
\begin{equation}\label{eq:simplemea2}
(d-c_j)\<\xi_k,\xi_j\>=
\sum_{i}c_i\<\xi_i^\perp,\xi_j\>\<\xi_k,\xi^\perp_i\>,
\end{equation}
for all $j,k$. Note that the diagonal matrix $dI-C$, where $C = \diag(c_1, \ldots, c_n)$, is invertible. Indeed,
$d-c_j=0$ for some $j$ implies that
\[
d-c_j=\sum_{i\ne j} c_i^{-1}|\<\eta_i,\xi_j\>|^2= 0,
\]
so that $\<\eta_i,\xi_j\>=0$ for all $i\ne j$. But then $|\<\eta_j,\xi_j\>|=1$, which is a contradiction. Hence $d-c_j>0$ for all $j$ and $dI-C$ is  positive definite.

Let us begin from \eqref{eq:simplemea2}. Let us denote $W_{ij} = \< \xi_i, \eta_j \>$. Clearly $W$ is a unitary matrix. It is straightforward to see, that
\begin{equation*}
\< \xi_i, \xi^\perp_j \> = (1 - \delta_{ij}) W_{ij} c_j^{-1},
\end{equation*}
so the condition \eqref{eq:simplemea2} becomes
\begin{equation}
(d - c_j) \delta_{kj} = \sum\limits_{i} (1 - \delta_{ki}) W_{ki} c_i^{-1} (1 - \delta_{ij}) W^*_{ij}, \label{SPM-WelementsEq}
\end{equation}
which can be written as a matrix equation of the form
\begin{equation}
dI - C = \left( W - \diag(W) \right) C^{-1} \left( W^{*} - \diag(W^{*}) \right). \label{SPM-WmatrixEq}
\end{equation}

At this point we are ready to settle the case $\dim(\Ha)=3$. 
\begin{prop}
Let $\dim(\Ha) = 3$, then the \eqref{eq:MEI} condition holds if and only if the matrix $W$, defined as above, has of one of the following forms
\begin{align*}
&W_1 =
\begin{pmatrix}
0 & 0 & e^{i\varphi_1} \\
e^{i\varphi_2} & 0 & 0 \\
0 & e^{i \varphi_3} & 0
\end{pmatrix}, \\
&W_2 =
\begin{pmatrix}
0 & e^{i\varphi_1} & 0 \\
0 & 0 &e^{i\varphi_2} \\
e^{i \varphi_3} & 0 & 0
\end{pmatrix},
\end{align*}
for $\varphi_i \in \mathbb{R}$. In other words, one POVM is a cyclic permutation of the other. 
\end{prop}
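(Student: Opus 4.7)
The plan is to extract strong combinatorial constraints from equation \eqref{SPM-WmatrixEq} by exploiting the fact that in dimension $3$ the relevant sums collapse to single terms. I would first compute the $(k,j)$ entry of the right-hand side for $k\neq j$: only indices $i$ with $i\neq k$ and $i\neq j$ contribute to the matrix product, and in dimension $3$ there is a unique such $i$, so the vanishing of this entry is equivalent to $W_{ki}\,\overline{W_{ji}}=0$. Read column by column, this says that in each column $i$ of $W$, at most one of the two off-diagonal entries (those in rows $\neq i$) is nonzero.

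Combined with the standing assumption $|W_{ii}|<1$ together with unit length of each column, this forces each column of $W$ to have \emph{exactly} one nonzero off-diagonal entry. I would then define $\sigma:\{1,2,3\}\to\{1,2,3\}$ by letting $\sigma(i)$ be the row of that entry, so that $\sigma(i)\neq i$. To show $\sigma$ is a bijection I would argue by contradiction: if some $j$ is not in the image of $\sigma$, then $W_{ji}=0$ for every $i\neq j$, and row normalisation forces $|W_{jj}|=1$, contradicting the assumption. Hence $\sigma$ is a fixed-point-free permutation of $\{1,2,3\}$, which must be one of the two $3$-cycles $(1\,2\,3)$ or $(1\,3\,2)$.

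Each of these two patterns fixes the positions of the off-diagonal zeros; the diagonal is then killed by row orthogonality. For $\sigma=(1\,2\,3)$, the only possibly nonzero off-diagonal entries are $W_{21},W_{32},W_{13}$, and the orthogonality of rows $1$ and $2$ of $W$ reduces to $W_{11}\overline{W_{21}}=0$; since $W_{21}\neq 0$ this gives $W_{11}=0$, and the analogous computations with rows $2,3$ and rows $1,3$ give $W_{22}=W_{33}=0$. Column normalisation then forces each of $W_{21},W_{32},W_{13}$ to have unit modulus, producing exactly the form $W_1$; the cycle $(1\,3\,2)$ yields $W_2$ by the same argument.

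For the converse direction one checks that if $W=W_1$ or $W=W_2$ then $\diag(W)=0$, so $C=I$, $d=2$, and \eqref{SPM-WmatrixEq} reduces to $I=WW^*$, which holds by unitarity. I do not anticipate a genuine obstacle; the main care needed is bookkeeping which off-diagonal entries are known to be nonzero, so that the row orthogonality relations can be divided through to conclude the diagonal entries vanish.
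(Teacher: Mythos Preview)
Your proposal is correct and follows essentially the same approach as the paper: both extract from the off-diagonal part of \eqref{SPM-WmatrixEq} the constraint $W_{ki}\overline{W_{ji}}=0$ for distinct $i,j,k$, use it to pin down the zero pattern of the off-diagonal entries, and then invoke unitarity of $W$ to kill the diagonal and force the surviving entries to have unit modulus. Your packaging via the fixed-point-free permutation $\sigma$ is a bit tidier than the paper's direct case split on whether $W_{12}=0$, and you use column normalisation of $W$ (together with $|W_{ii}|<1$) in place of the paper's diagonal equation \eqref{SPM-WelementsEq} with $k=j$ to guarantee a nonzero off-diagonal in each column, but these are cosmetic differences rather than a genuinely different route.
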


\begin{proof}
Consider the equation \eqref{SPM-WelementsEq} and let $j \neq k$, then we obtain
\begin{equation}
0 = W_{ki} c_i^{-1} \bar{W}_{ji}, \label{SPM-3Doffdiag}
\end{equation}
for $i \neq j \neq k \neq i$. By putting $i=1,2,3$, we obtain $0= W_{31} \bar{W}_{21}= W_{12} \bar{W}_{23}= W_{23} \bar{W}_{13}$. It follows that some off-diagonal elements of the matrix $W$ must be zero. On the other hand, for $k=j$ the equation \eqref{SPM-WelementsEq} becomes
\begin{equation}
d - c_j = \sum\limits_{i \neq j} W_{ji} c^{-1}_i \bar{W}_{ji} \label{SPM-3Ddiag}
\end{equation}
which shows that some off-diagonal elements must be non-zero. Putting, say, $W_{12}=0$, the above equalities imply that 
also $W_{23}=W_{31}=0$ and all other off-diagonal elements are nonzero. From the condition $WW^*=I$, we obtain that 
$W_{11} = W_{22} = W_{33} = 0$ and $|W_{21}|=|W_{32}|=|W_{13}|=1$, this implies that $W$ is of the form $W_1$ and the basis $|\xi_i\>$ and $|\eta_j\>$ are just cyclic permutations of each other, modulo phase change.
Similarly, assuming that $W_{12}\ne 0$, we obtain that $W$ is of the form $W_2$, which is just the other possible cyclic permutation of the basis, modulo complex phase.

For the converse, it is easy to check that both $W_1$ and $W_2$ satisfy \eqref{SPM-WmatrixEq}. See also the remark following the proof of Proposition \ref{prop:const}.

\end{proof}

The basis vectors $\{|\xi_i\>\}$ and $|\eta_i\>$ are given only up to a phase, note that also the condition \eqref{SPM-WmatrixEq} is invariant under phase changes. We may therefore assume that $\< \xi_i,\eta_i \> \geq 0$ for all $i$. Then the matrix $\diag(W)$ has only real non-negative entries, hence it is self-adjoint.

From $c_i = \sqrt{1-|\< \xi_i, \eta_i \> |^2} = \sqrt{1-W_{ii}^2}$ we can express $W_{ii} = \sqrt{1 - c_i^2}$ and hence $\diag(W) = \sqrt{I-C^2}$. Since both $C$ and $\diag(W)$ are self-adjoint matrices, their functions are well defined by the spectral theorem. Multiplying out the right hand side of the equation \eqref{SPM-WmatrixEq} and using the the obtained expression for $\diag(W)$ we obtain:
\begin{align}
dI =& WC^{-1}W^* + C^{-1} - \left( C^{-1} \sqrt{I - C^2} W^* \right. \nonumber \\
&+ \left. W C^{-1} \sqrt{I - C^2} \right). \label{SPM-WCmatrixEq}
\end{align}
Clearly the operators $WC^{-1}W^* + C^{-1}$ and $C^{-1} \sqrt{I - C^2} W^* + W C^{-1} \sqrt{I - C^2}$ are self-adjoint. Let us denote $\mathscr{L}_S (\mathcal{H})$ the real linear space of self-adjoint operators on $\mathcal{H}$, endowed with a Hilbert-Schmidt scalar product, which we denote as $(\cdot, \cdot)_{HS}$. Let us choose an orthonormal basis on $\mathscr{L}_S$ composed of the operator $\dim(\mathcal{H})^{-\frac{1}{2}} I$ and of operators $X_i$, such that $\Tr(X_i)=0$.

From equation \eqref{SPM-WCmatrixEq} we can see that:
\begin{align}
WC^{-1}W^* + C^{-1} = 
\beta \dim(\mathcal{H})^{-\frac{1}{2}} I + L, \label{SPM-Wsep1} \\
C^{-1} \sqrt{I - C^2} W^* + W C^{-1} \sqrt{I - C^2} = \nonumber \\
b \dim(\mathcal{H})^{-\frac{1}{2}} I + L, \label{SPM-Wsep2}
\end{align}
where $L$ is some real linear combination of $X_i$, hence $\Tr L=0$ and $\beta, b \in \mathbb{R}$. It is easy to compute $\beta$ as follows:
\begin{align*}
\beta &= (\dim(\mathcal{H})^{-\frac{1}{2}} I, WC^{-1}W^* + C^{-1} )_{HS} \\
&= 2 \Tr\left( \dim(\mathcal{H})^{-\frac{1}{2}} C^{-1} \right) \nonumber \\
&= 2 \dim(\mathcal{H})^{-\frac{1}{2}} \sum\limits_i c^{-1}_i.
\end{align*}
Now it is easy to see, that:
\begin{equation*}
b = 2 \dim(\mathcal{H})^{-\frac{1}{2}} \sum\limits_i \left( c^{-1}_i - c_i  \right).
\end{equation*}
because from equation \eqref{SPM-WCmatrixEq} we see that $\beta - b = 2 \dim(\mathcal{H})^{-\frac{1}{2}} \sum_i c_i$ as $d = 2 (\dim(\mathcal{H}))^{-1} \sum_i c_i$.

We have proved the following:
\begin{prop}
Let $M_i = P_{\xi_i}$ and $N_i = P_{\eta_i}$ for some orthonormal bases $|\xi_i\>$ and $|\eta_j\>$. Then \eqref{eq:MEI} condition is equivalent to the equations \eqref{SPM-Wsep1}, \eqref{SPM-Wsep2}.
\end{prop}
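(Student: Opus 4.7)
The plan is to prove the equivalence directly, leveraging the reduction of MEI to the matrix equation \eqref{SPM-WCmatrixEq} that has already been established in the discussion preceding the proposition.

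For the forward direction, assume MEI. Then \eqref{SPM-WCmatrixEq} reads $A - B = dI$, where $A := WC^{-1}W^* + C^{-1}$ and $B := C^{-1}\sqrt{I-C^2}W^* + WC^{-1}\sqrt{I-C^2}$ are both self-adjoint. Decomposing each in the orthonormal basis of $\mathscr{L}_S(\Ha)$ consisting of $\dim(\Ha)^{-1/2} I$ and the traceless operators $X_i$, I would write $A = \beta \dim(\Ha)^{-1/2} I + L_A$ and $B = b \dim(\Ha)^{-1/2} I + L_B$ with $\Tr L_A = \Tr L_B = 0$. Since $A - B$ is a scalar multiple of $I$, its components along each $X_i$ vanish, forcing $L_A = L_B =: L$, which is exactly the content of \eqref{SPM-Wsep1} and \eqref{SPM-Wsep2}. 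The scalars $\beta, b$ are then recovered as $\dim(\Ha)^{-1/2} \Tr A$ and $\dim(\Ha)^{-1/2}\Tr B$ respectively; using the normalization $\<\xi_i,\eta_i\>\ge 0$, so that $W_{ii} = \sqrt{1-c_i^2}$, one computes $\beta = 2\dim(\Ha)^{-1/2}\sum_i c_i^{-1}$ and $b = 2\dim(\Ha)^{-1/2}\sum_i (c_i^{-1} - c_i)$.

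For the converse, I would subtract \eqref{SPM-Wsep2} from \eqref{SPM-Wsep1} to cancel the common traceless operator $L$, obtaining
\begin{equation*}
A - B = (\beta - b)\dim(\Ha)^{-1/2} I.
\end{equation*}
A one-line arithmetic check using the explicit values of $\beta$, $b$, and $d = 2\dim(\Ha)^{-1}\sum_i c_i$ confirms that $(\beta-b)\dim(\Ha)^{-1/2} = 2\dim(\Ha)^{-1}\sum_i c_i = d$, so \eqref{SPM-WCmatrixEq} holds and hence so does MEI.

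There is no serious obstacle; the argument is essentially the observation that a self-adjoint operator has a unique decomposition into a scalar multiple of $I$ and a traceless part. The only points requiring care are the trace computation for $b$ (which relies on the sign convention $\<\xi_i,\eta_i\>\ge 0$ and hence $\diag(W) = \sqrt{I - C^2}$) and the arithmetic identity $\beta - b = d \dim(\Ha)^{1/2}$, both of which have in fact already been carried out in the discussion preceding the proposition. Thus the proof amounts to reorganizing that discussion into a clean equivalence statement.
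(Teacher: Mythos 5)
Your proposal is correct and takes essentially the same route as the paper: the paper also reduces \eqref{eq:MEI} to the matrix identity \eqref{SPM-WCmatrixEq} and then splits the two self-adjoint operators into a multiple of $\dim(\Ha)^{-1/2}I$ plus a common traceless component $L$, with $\beta$ and $b$ fixed by traces under the convention $\diag(W)=\sqrt{I-C^2}$ (the paper gets $b$ from $\beta-b$ via \eqref{SPM-WCmatrixEq} rather than by tracing the left-hand side of \eqref{SPM-Wsep2}, an immaterial difference). Your explicit converse step, cancelling $L$ and checking $(\beta-b)\dim(\Ha)^{-1/2}=d$, is precisely what the paper leaves implicit.
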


The obtained equations look rather complicated, but they yield some results in specific cases.
\begin{prop}\label{prop:const}
With the above notations, assume that $c_i = c$ for all $i$. Then
\begin{enumerate}
\item[(i)] If $c\neq 1$ and $\dim(\Ha)$ is odd, then \eqref{eq:MEI} cannot be satisfied.
\item[(ii)]  If $c\neq 1$ and $\dim(\Ha)$ is even, then  \eqref{eq:MEI} is satisfied if and only if $W = \sqrt{1-c^2}I + icG$, where $G$ is a symmetric unitary matrix with zero diagonal.
\item[(iii)] If $c=1$, then  \eqref{eq:MEI}  always holds.

\end{enumerate}
\end{prop}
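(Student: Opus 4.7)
My plan is to specialize the matrix equation \eqref{SPM-WCmatrixEq} to $C = cI$ and exploit the unitarity of $W$ together with the already-established identity $\diag(W) = \sqrt{I-C^2}$. Under $C = cI$ we have $C^{-1} = c^{-1}I$ and $\sqrt{I-C^2} = \sqrt{1-c^2}\,I$, and the formula $d = 2\dim(\Ha)^{-1}\sum_i c_i$ collapses to $d = 2c$. Using $WW^* = I$, the right-hand side of \eqref{SPM-WCmatrixEq} becomes $2c^{-1}I - c^{-1}\sqrt{1-c^2}\,(W+W^*)$, so after clearing denominators the \eqref{eq:MEI} condition reduces to the single equation
\[
\sqrt{1-c^2}\,(W+W^*) \;=\; 2(1-c^2)\,I.
\]

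When $c = 1$ both sides vanish identically, which proves (iii). When $c < 1$, dividing by $\sqrt{1-c^2}$ yields $W+W^* = 2\sqrt{1-c^2}\,I$. Combined with $\diag(W) = \sqrt{1-c^2}\,I$, this motivates writing $W = \sqrt{1-c^2}\,I + icG$ for some $G$ with zero diagonal. Then $W+W^* = 2\sqrt{1-c^2}\,I$ is equivalent to $G = G^*$, and once $G$ is Hermitian the unitarity condition $WW^* = I$ expands to $(1-c^2)I + c^2 G^2 = I$, i.e.\ $G^2 = I$ (which incidentally makes $G$ unitary). Thus \eqref{eq:MEI} is equivalent to the existence of a Hermitian matrix $G$ with $G^2 = I$ and zero diagonal; for the converse direction, given such a $G$, take $\{|\xi_i\>\}$ to be the standard basis and define $|\eta_j\> = \sum_i W_{ij}|\xi_i\>$, which is orthonormal because $W$ is unitary and realizes all the required inner products.

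To finish (i) and (ii), observe that any Hermitian $G$ satisfying $G^2 = I$ has spectrum in $\{-1,+1\}$, while the zero-diagonal condition forces $\Tr G = 0$. Hence the multiplicities of $+1$ and $-1$ must coincide, which is impossible when $\dim(\Ha)$ is odd, proving (i). In even dimension $n = 2k$ the block matrix
\[
G \;=\; \begin{pmatrix} 0 & I \\ I & 0 \end{pmatrix}
\]
(with $I$ the $k\times k$ identity) is Hermitian, squares to the identity, and has zero diagonal, so it provides the required $G$, giving (ii). The only slightly delicate point is the case $c=1$, where division by $\sqrt{1-c^2}$ is not allowed; but there the derived equation reduces to $0=0$ and is absorbed into (iii) with no extra work, so the whole argument is essentially bookkeeping around this degenerate value of $c$.
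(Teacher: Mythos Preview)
Your argument is correct and follows essentially the same route as the paper. Both proofs specialize the matrix condition to $C=cI$ and arrive at the key equation $W+W^*=2\sqrt{1-c^2}\,I$ (your derivation does this directly from \eqref{SPM-WCmatrixEq}, while the paper detours through \eqref{SPM-Wsep1}--\eqref{SPM-Wsep2} and the vanishing of the traceless part $L$; these are the same computation). From there, both analyze the spectrum: the paper invokes the eigenvalue argument of Proposition~\ref{prop:unit-chan} and writes $G=2P-I$ for the eigenprojection $P$, while you parametrize $W=\sqrt{1-c^2}\,I+icG$ from the outset and read off $G=G^*$, $G^2=I$, $\Tr G=0$ directly. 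Note that the paper's word ``symmetric'' in the statement should be read as \emph{Hermitian} (self-adjoint), which is exactly what its construction $G=2P-I$ yields and what you prove; your explicit block example in even dimension and your converse remark about realizing $W$ from a given $G$ are a small bonus beyond what the paper spells out.
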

\begin{proof}
Let's assume that $c_i = c \neq 1$ for all $i$.  Immediately we see, that $C^{-1} = c^{-1} I$ and $\sqrt{I - C^2} = \sqrt{1 - c^2} I$. As a direct consequence of this we obtain $L=0$ from the equation \eqref{SPM-Wsep1}. Using this to our advantage in the equation \eqref{SPM-Wsep2} we obtain that in this case, \eqref{eq:MEI} is equivalent to
\begin{equation}
W^* + W = 2 \sqrt{1-c^2} I. \label{SPM-GoMUB-W+W*}
\end{equation}
As in the proof of Proposition \ref{prop:unit-chan}, we see that $W$ has exactly two eigenvalues with the same multiplicity, this also implies (i).  If $\dim(\Ha)$ is even, it is clear that these eigenvalues must be equal to $\lambda_\pm:=\sqrt{1-c^2}\pm ic$. Let $P$ be the eigenprojection corresponding to $\lambda_+$, then $W=\sqrt{1-c^2}I+icG$,  with $G=2P-I$. Conversely, it is easy to see that if $W$ is of this form, then \eqref{SPM-GoMUB-W+W*} holds, this finishes the proof of (ii).

Now assume that $c_i = 1$, which means that $\langle \xi_i, \eta_i \rangle = 0$ for all $i$. Again we immediately see that conditions 
\eqref{SPM-Wsep1} and \eqref{SPM-Wsep2} are satisfied, with  $L=0$. This implies (iii). 

\end{proof}

 Note that  we can conclude from the last statement that any two bases, such that the unitary matrix mapping one basis to the other is hollow satisfy the condition \eqref{eq:MEI}. Especially, this happens if the basis  $|\eta_j\rangle$ is  a permutation of $|\xi_j\rangle$, leaving no element fixed.

A particular case of the situation described in Proposition \ref{prop:const} is when the bases are mutually unbiased, that is when $|\<\xi_i,\eta_j\>|=\dim(\Ha)^{-1/2}$ for all $i,j$. Then $H = \sqrt{dim(\Ha)} W$ is a Hadamard matrix, see e.g. \cite{Tadej2006} for more information on complex Hadamard matrices. As we have seen, such bases can satisfy the condition \eqref{eq:MEI} only in even-dimensional Hilbert spaces. Below, we provide a further result for $\dim(\Ha)=4$. Recall that two Hadamard matrices are equivalent if one can be turned into the other by multiplication by diagonal unitaries and permutations.  Up to equivalence, any Hadamard matrix can be turned into a dephased form, with all elements in the first row and column equal to unity. In this way, any four-dimensional Hadamard matrix is equivalent to a member of a one-parameter family, containing a unique matrix $H_{\mathbb R}$ with real entries. Since the vectors $|\xi_i\>$ and $\eta_i\>$ are given only up to a phase,  
the next result shows that this is the only case when  \eqref{eq:MEI} is satisfied.

\begin{prop}
Let $\dim(\Ha) = 4$, and let the bases $|\xi_j\>$ and $|\eta_j\>$ be mutually unbiased. Then the condition \eqref{eq:MEI} is satisfied if and only if the corresponding Hadamard matrix satisfies $2W=D_1H_{\mathbb R}D_2$, where $D_1$, $D_2$ are diagonal unitaries and
 \begin{equation*}
H_{\mathbb R}= \begin{pmatrix}
1 & 1 & 1 & 1 \\
1 & -1 & -1 & 1 \\
1 & 1 & -1 & -1 \\
1 & -1 & 1 & -1
\end{pmatrix}.
\end{equation*}
\end{prop}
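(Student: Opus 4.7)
The plan is to apply Proposition \ref{prop:const}(ii) and explicitly solve the resulting constraints in dimension four. Since the bases are mutually unbiased, $|\<\xi_i,\eta_j\>|=1/2$ for all $i,j$, so $c_i=\sqrt 3/2$ is constant. Part (ii) of that proposition then gives that \eqref{eq:MEI} is equivalent to
\[
W = \tfrac12 I + i\tfrac{\sqrt 3}{2}G,
\]
where $G$ is self-adjoint with $G^2 = I$ and $G_{ii}=0$; the MUB hypothesis $|W_{ij}|=1/2$ adds $|G_{ij}|=1/\sqrt 3$ for $i\neq j$. Equivalently, setting $M := \sqrt 3\, G$, the matrix $M$ is Hermitian with zero diagonal, unit-modulus off-diagonal entries, and satisfies $M^2 = 3I$.

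The residual phase freedom in choosing the basis vectors corresponds to $M \mapsto DMD^*$ for a diagonal unitary $D$; after using it to normalize $M_{1j} = 1$ for $j = 2,3,4$, the three equations $(M^2)_{1j} = 0$ reduce to a small trigonometric system that forces $M_{23}, M_{24}, M_{34} \in \{i, -i\}$ with a sign pattern determined up to an overall complex conjugation. Substituting into $H := 2W = I + iM$ and dephasing by $D_1 = D_2 = \diag(1, -i, -i, -i)$ produces a real $\pm 1$ matrix whose rows match those of $H_{\mathbb R}$ exactly in one sign choice, and up to a cyclic permutation of the indices $\{2,3,4\}$ in the other. Such a relabeling of basis vectors leaves the underlying POVMs---and hence the MEI condition---invariant, so in either case we obtain $2W = D_1' H_{\mathbb R} D_2'$ for some diagonal unitaries $D_1', D_2'$.

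The converse direction is immediate: if $2W = D_1 H_{\mathbb R} D_2$, the underlying MUBs coincide, up to phases, with those built from the Walsh--Hadamard structure of $H_{\mathbb R}$, for which \eqref{eq:MEI} is verified by a direct computation of the form $\sum_i P_{\xi_i, \eta_i} = 2I$. The main obstacle I anticipate is the case analysis of the two admissible sign patterns for $(M_{23},M_{24},M_{34})$: one directly yields $H_{\mathbb R}$ after dephasing, while the other requires identifying the appropriate cyclic relabeling of the basis vectors. Once this is handled, the uniqueness of $M$ up to the gauge $M\mapsto DMD^*$ and complex conjugation completes the argument.
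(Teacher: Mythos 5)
Your route is essentially the paper's (reduce to Proposition \ref{prop:const}(ii), gauge away phases, solve for the off-diagonal entries), and your computation is correct up to and including the two admissible sign patterns $(M_{23},M_{24},M_{34})=\pm(i,-i,i)$. The gap is your final step. From ``a relabelling of basis vectors leaves the POVMs and hence \eqref{eq:MEI} invariant'' you cannot conclude that the \emph{original} $W$ of the second pattern has the form $D_1H_{\mathbb R}D_2$: invariance of the hypothesis under relabelling does not transport the conclusion back, because the conclusion is not relabelling-invariant. Indeed, the relabelling you need (swap the indices $3$ and $4$ in both bases, equivalently your cyclic row permutation) turns $H_{\mathbb R}$ into $H_{\mathbb R}^t$, and $H_{\mathbb R}^t$ is not diagonally equivalent to $H_{\mathbb R}$: if $H_{\mathbb R}^t=D_1H_{\mathbb R}D_2$, comparing the all-ones first row and first column forces $D_1$ and $D_2$ to be scalar, hence $H_{\mathbb R}^t=H_{\mathbb R}$, which is false since $H_{\mathbb R}$ is not symmetric. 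Concretely, the second pattern gives $2W=I+iM$ with $M_{1j}=M_{j1}=1$ and $(M_{23},M_{24},M_{34})=(i,-i,i)$; this $W$ is unitary with all entries of modulus $\tfrac12$ and satisfies $W+W^*=I$, so the corresponding mutually unbiased pair does satisfy \eqref{eq:MEI}, yet $2W=\diag(1,i,i,i)\,H_{\mathbb R}^t\,\diag(1,i,i,i)$ admits no representation $D_1H_{\mathbb R}D_2$. So the step ``in either case we obtain $2W=D_1'H_{\mathbb R}D_2'$'' is not merely unproved, it is false for that case, and no argument can close it as stated.

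You should be aware that the obstacle you hit is real and also affects the paper: its proof simply asserts that \eqref{eq:2W} is ``the general form'' of a hollow self-adjoint unitary with constant off-diagonal modulus, and that family is exactly the diagonal-conjugation orbit of your first sign pattern, omitting the transposed family your computation produces. The proposition is correct only if the Hadamard matrix is identified up to the full equivalence (diagonal unitaries \emph{and} permutations) recalled just before its statement, equivalently if the conclusion is read as $2W=D_1HD_2$ with $H\in\{H_{\mathbb R},H_{\mathbb R}^t\}$; note also that the literal form $2W=D_1H_{\mathbb R}D_2$ is not even stable under exchanging the roles of the two measurements or under a common reordering of the outcomes, which leave the discrimination problem untouched. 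With the weakened (permutation-inclusive) reading, your case analysis together with your converse direction, which is fine, does complete the proof.
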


\begin{proof} It is clear that $c_i=c=\sqrt{3}/2$. By Proposition \ref{prop:const}, we see that we need to search for a $4 \times 4$ hollow symmetric unitary matrix $G$ such that each off-diagonal element has modulus $1/\sqrt{3}$. A general form for such a matrix can be found by a straightforward  computation. We find that
\begin{equation}\label{eq:2W}
2W= I+i\sqrt{3}G= \begin{pmatrix}
1 & \bar{b}c & i \bar{b} & i \bar{a} \\
-b\bar{c} & 1 & i \bar{c} & -i\bar{a} b \bar{c} \\
ib & ic & 1 & \bar{a} b \\
ia & -ia\bar{b}c & -a\bar{b} & 1
\end{pmatrix}.
\end{equation}
for $|a|=|b|=|c|=1$, so that $2W=
D_1H_{\mathbb R}D_2$, where $D_1=diag(1,\bar bc, i\bar b, i\bar a)$ and 
$D_2=diag(1,-b\bar c, ib, ia)$. 

Conversely, suppose that $2W=D_1H_{\mathbb R}D_2$ for diagonal unitaries $D_1$ and $D_2$. Since we also assume that all diagonal elements of $W$ are equal to 1, we obtain that we must have $D_1=diag(d_1,d_2,d_3,d_4)$ and $D_2=diag(\bar d_1, -\bar d_2,-\bar d_3,-\bar d_4)$. It is now easy  to check that $2W$ has the form \eqref{eq:2W}, with 
$a=-i\bar d_1d_4$, $b=-i\bar d_1d_3$ and $c=i\bar d_2d_3$. 

\end{proof}

\section{Examples}
In this section, we present examples based on the results of the previous sections.

\subsection{Qubit channels}
To underline how maximally entangled input states may be used for discrimination of two qubit channels, we will present an example of discrimination of identity channel $\phi_{id}$ and amplitude damping channel $\phi_{AD}$. The amplitude damping channels are not unital and hence \eqref{eq:MEI} is not necessarily satisfied, and we will see that the maximally entangled state is indeed not optimal. The identity channel may be replaced by a unitary channel with some changes to the following calculations. We will set $\lambda=\frac{1}{2}$.

\begin{figure}
\begin{center}
\includegraphics[width=\linewidth]{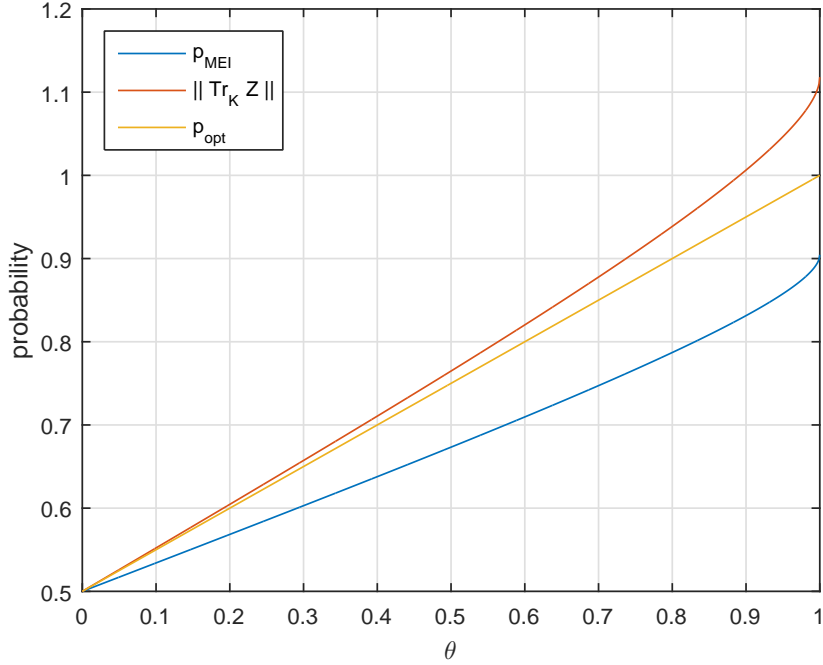}
\caption{The dependence of success probability with maximally entangled input state $\pmei$, optimal  success probability $\popt$ and the upper bound $\Vert \ptr_\Ka Z \Vert$ given by Thm. \ref{thm:ErrorEstimate} on the parameter $\theta$ in discrimination of the amplitude damping channel $\phi_{AD}$   and the identity channel (Example \ref{ex:qubit}).} \label{fig:prob-theta}
\end{center}
\end{figure}

\begin{ex} \label{ex:qubit}
Let $\Ha$ denote complex Hilbert space with $\dim(\Ha)=2$ and $|0\>, |1\>$ will denote some orthonormal basis of $\Ha$. The amplitude damping channel is represented by Kraus operators $A_\theta$, $B_\theta$, defined as
\begin{align*}
A_\theta &= |0\>\<0| + \sqrt{1-\theta} |1\>\<1|, \\
B_\theta &= \sqrt{\theta} |0\>\<1|,
\end{align*}
where $\theta \in [0,1]$ is a parameter. Note that for $\theta = 0$ the amplitude damping channel becomes the identity. We will proceed as follows: we will find the Choi matrices corresponding to $\phi_{id}$ and $\phi_{AD}$ to obtain $\pmei$ and the upper bound on the optimal success probability. Then we will find $\popt$ to compare it with the upper bound.

Let $C_{id} = C(\phi_{id})$ and $C_{AD} = C(\phi_{AD})$. We will be interested in the matrix $\Delta_\frac{1}{2} = \dfrac{1}{2} ( C_{id} - C_{AD} )$ which is of the form
\begin{equation*}
\Delta_\frac{1}{2} = \dfrac{1}{2} \begin{pmatrix}
0 & 0 & 0 & 1-\sqrt{1 - \theta} \\
0 & 0 & 0 & 0 \\
0 & 0 & -\theta & 0 \\
1-\sqrt{1 - \theta} & 0 & 0 & \theta
\end{pmatrix}.
\end{equation*}
Since $\pmei = \frac{1}{2}(1 + \frac{1}{2} \Tr | \Delta_\frac{1}{2}|)$ we can already find $\pmei$ as a function of $\theta$. It is easy to see that $0$ and $- \frac{1}{2} \theta$ are eigenvalues of $\Delta_\frac{1}{2}$. Finding the other two eigenvalues is easy since the problem reduces to finding eigenvalues of a $2 \times 2$ matrix, they are $\frac{1}{4} \big( \theta + \sqrt{\theta^2 + 4( 1 - \sqrt{1-\theta} )^2 } \, \big)$ and $\frac{1}{4} \big( \theta - \sqrt{\theta^2 + 4( 1 - \sqrt{1-\theta} )^2 } \, \big)$. Let us denote
\begin{align*}
&\lambda_1 = \frac{1}{2} \big( \theta + \sqrt{\theta^2 + 4( 1 - \sqrt{1-\theta} )^2 } \, \big) \\
&\lambda_2 = \frac{1}{2} \big( \theta - \sqrt{\theta^2 + 4( 1 - \sqrt{1-\theta} )^2 } \, \big)
\end{align*}
then we have $\pmei = \frac{1}{2}( 1 + \frac{1}{2} \lambda_1)$. Moreover after a tedious calculation it can be seen that
\begin{align*}
\ptr_1 | \Delta_\frac{1}{2} | =& \dfrac{\lambda_1}{2} I - \dfrac{\lambda_1^2 + \lambda_1 \lambda_2}{2(\lambda_1 - \lambda_2)} |0\>\<0| \\
&+ \dfrac{\lambda_1^2 + \lambda_1 \lambda_2}{2(\lambda_1 - \lambda_2)} |1\>\<1|
\end{align*}
and we have $\popt \leq \frac{1}{2} \Vert I + \ptr_1 | \Delta_\frac{1}{2} | \Vert$. Since $\lambda_1 \geq 0$ and $\lambda_1 \pm \lambda_2 \geq 0$ we have
\begin{equation*}
\popt \leq \dfrac{1}{2} + \dfrac{\lambda_1}{4} \left( 1 + \dfrac{\theta}{\sqrt{\theta^2 + 4( 1 - \sqrt{1-\theta} )^2 }} \right)
\end{equation*}

To verify the upper bound will find $\popt$, for which we only need to find the value of $\frac{1}{2}\Vert \phi_{id} - \phi_{AD} \Vert_\diamond$, which we have done by numeric methods. According to \cite{Watrous-SDP} the problem of finding a diamond norm can be formulated as following SDP problem:
\begin{align*}
\max_{X} \dfrac{1}{2} ( \Tr ( \Delta_{\frac{1}{2}}^* X ) + \Tr ( \Delta_{\frac{1}{2}} X ) )\\
\mbox{s.t. }\ 
\begin{pmatrix}
I \otimes \rho_0 & X \\
X^* & I \otimes \rho_1
\end{pmatrix} &\geq 0,
\end{align*}
where $X \in B(\Ha \otimes \Ha)$, $\rho_0, \rho_1 \in \states(\Ha)$. We used MATLAB and the package CVX for solving the convex program, \cite{cvx, Grant-Boyd}. We have computed the numerical values for $100$ values of the parameter $\theta$, homogeneously distributed on the interval $[0,1]$. Between these points straights lines were drawn, hence the figure looks like a continuous line.

The resulting expressions of $\pmei$, $\Vert \ptr_\Ka Z \Vert$ and numerical data of $\popt$ as functions of $\theta$ are plotted in the Fig. \ref{fig:prob-theta}. Even though it shows that maximally entangled input state is not optimal, notice that the upper bound is close to $p_{opt}$ for small values of the parameter $\theta$.
\end{ex}

\subsection{Unitary channels}

\begin{figure}
\begin{center}
\includegraphics[width=\linewidth]{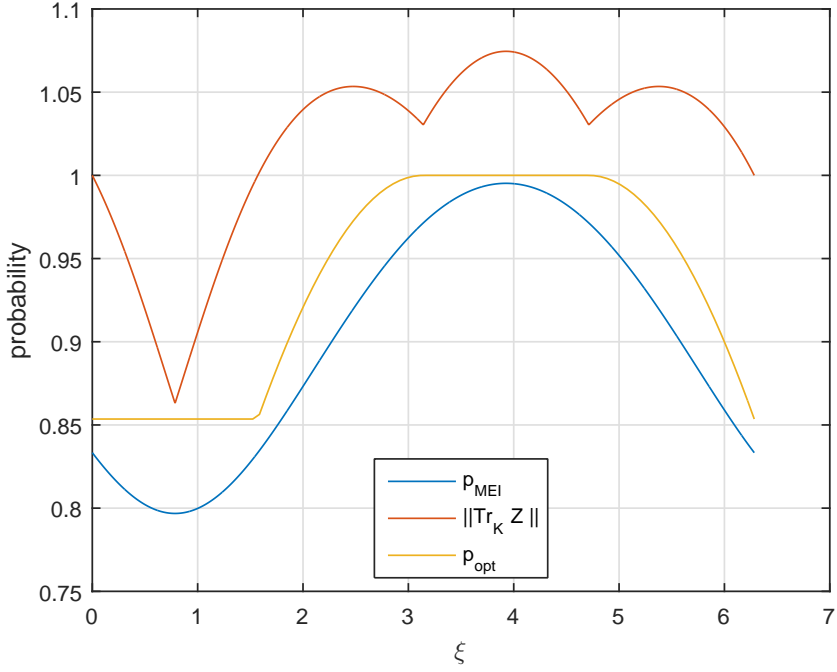}
\caption{The dependence of success probability with maximally entangled input state $\pmei$, optimal success probability  $\popt$ and the upper bound $\Vert \ptr_\Ka Z \Vert$ given by Thm. \ref{thm:ErrorEstimate} on the parameter $\xi$ in 
discrimination of the unitary channel  \eqref{eq:unitChan-W1} and the identity
(Example \ref{ex:unitary}).} \label{fig:prob-xi}
\end{center}
\end{figure}

\begin{ex}\label{ex:unitary}  Let $\lambda = 1/2$ and $\dim(\Ha) = 3$. Without loss of generality, we may always assume that we discriminate the identity channel against a unitary $Ad_W$. As our first example, we consider the unitary matrix 
\begin{equation}
W_1 = \begin{pmatrix}
1 & 0 & 0 \\
0 & i & 0 \\
0 & 0 & e^{i \xi}
\end{pmatrix}. \label{eq:unitChan-W1}
\end{equation}
At this point, it is easy to compute $\pmei$ and $\Vert \ptr_\Ka Z \Vert$, as we can express them as functions of $W_1$, which is only dependent on $\xi$, so the bounds  are functions of $\xi$. Moreover it is possible to compute $\popt$ by numerical methods described in Example \ref{ex:qubit}. The expressions for $\pmei$ and $\Vert \ptr_\Ka Z \Vert$ are long and messy, they are plotted in the Fig. \ref{fig:prob-xi} as well as the computed values of $\popt$. From the figure it is once again clear that on one hand as $\pmei$ rises the the upper bound becomes meaningless, but on the other hand if $\pmei$ is small then the upper bound and $\popt$ is small as well.

Another case we consider is matrix of the form
\begin{equation}
W_2 = \begin{pmatrix}
1 & 0 & 0 \\
0 & \frac{1}{\sqrt{2}} (1 + i) & 0 \\
0 & 0 & e^{i \xi}
\end{pmatrix}. \label{eq:unitChan-W2}
\end{equation}
Again, it is straightforward to obtain $\pmei$ and $\Vert \ptr_\Ka Z \Vert$ and to compute $\popt$ numerically, the obtained functions are plotted in Fig. \ref{fig:prob-xi-2}. Again it can be nicely seen that as $\pmei$ rises towards $1$ the upper bound becomes  meaningless.

\begin{figure}
\begin{center}
\includegraphics[width=\linewidth]{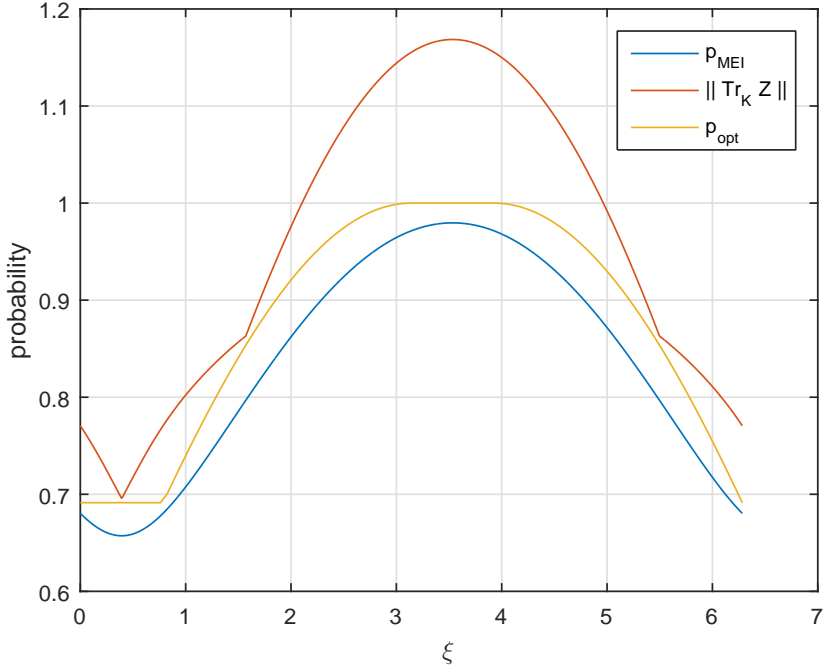}
\caption{The dependence of success probability with maximally entangled input state $\pmei$, optimal success probability  $\popt$ and the upper bound $\Vert \ptr_\Ka Z \Vert$ given by Thm. \ref{thm:ErrorEstimate} on the parameter $\xi$ in 
discrimination of the unitary channel  \eqref{eq:unitChan-W2} and the identity
(Example \ref{ex:unitary}).} \label{fig:prob-xi-2}
\end{center}
\end{figure}
\end{ex}

\subsection{Simple projective measurements}
Let us demonstrate our results once more, this time for simple projective measurements.
\begin{ex} \label{ex:SPM}
Let $\lambda = \frac{1}{2}$, $\dim(\Ha) \geq 3$, $|\eta_1\> = \frac{1}{\sqrt{2}}( |\xi_1\> + |\xi_2\> )$, $|\eta_2\> = \frac{1}{\sqrt{2}}( |\xi_1\> - |\xi_2\> )$, $|\eta_j\> = |\xi_j\>$ for $j \geq 3$. According to our previous results the maximally entangled input state is not optimal for discriminating these two measurements. As first we need to find $\ptr_\Ka \vert \Delta_\frac{1}{2} \vert = \frac{1}{2} \sum_i | P_{\xi_i} - P_{\eta_i}|$. We get
\begin{align*}
| P_{\xi_1} - P_{\eta_1}| &= \dfrac{1}{2} | P_{\xi_1} - P_{\xi_2} - |\xi_1\>\< \xi_2| - |\xi_2\>\< \xi_1| |, \\
| P_{\xi_2} - P_{\eta_2}| &= \dfrac{1}{2} | P_{\xi_2} - P_{\xi_1} + |\xi_1\>\< \xi_2| + |\xi_2\>\< \xi_1| |,
\end{align*}
from which we see, that $| P_{\xi_1} - P_{\eta_1}| = | P_{\xi_2} - P_{\eta_2}|$. The operator $P_{\xi_2} - P_{\eta_2}$ is diagonal in the orthonormal basis $|\varphi_i\>$, $i=1,\ldots, \dim(\Ha)$, where
\begin{align*}
|\varphi_1\> &= \dfrac{1}{\sqrt{2}\sqrt{2-\sqrt{2}}} \left( |\xi_1\> + (1-\sqrt{2}) |\xi_2\> \right), \\
|\varphi_2\> &= \dfrac{1}{\sqrt{2}\sqrt{2+\sqrt{2}}} \left( |\xi_1\> + (1+\sqrt{2}) |\xi_2\> \right), \\
|\varphi_j\> &= |\xi_j\>,
\end{align*}
where $j \geq 3$. In this basis it holds that
\begin{equation*}
P_{\xi_1} - P_{\eta_1} = \dfrac{1}{\sqrt{2}} P_{\varphi_1} - \dfrac{1}{\sqrt{2}} P_{\varphi_2}.
\end{equation*}
The following calculation is straightforward. We obtain
\begin{equation*}
\ptr_\Ka | \Delta_\frac{1}{2} | = \dfrac{1}{\sqrt{2}} (P_{\varphi_1} + P_{\varphi_2}),
\end{equation*}
Now it is easy to see that
\begin{align*}
\pmei &= \dfrac{1}{2} + \dfrac{1}{\sqrt{2} \dim(\Ha)}, \\
\Vert \ptr_\Ka Z \Vert &= \dfrac{2 + \sqrt{2}}{4} \approx 0.8535 \ldots
\end{align*}
i.e. the bound is meaningful and the same for all $\dim(\Ha)$, even though $\pmei$ tends to $\frac{1}{2}$ from above in the formal limit $\dim(\Ha) \rightarrow \infty$.

To underline the correctness of the upper bound we will find the optimal state for discrimination of the channels. Let us denote $\Phi_\xi$, $\Phi_\eta$ the channel corresponding to simple projective measurement corresponding to the set of projectors $\{P_{\xi_i}\}$, $\{P_{\eta_i}\}$ respectively and denote $P_2 = P_{\xi_1} + P_{\xi_2}$. Notice that the channels $\Phi_\xi$ and $\Phi_\eta$  can be separated as follows
\begin{align*}
\Phi_\xi (\rho) &= \sum\limits_{i = 1}^2 \Tr (P_{\xi_i} \rho) P_{\xi_i} + \sum\limits_{i = 3}^{\dim(\Ha)} \Tr (P_{\xi_i} \rho) P_{\xi_i} \\
&= \phi_\xi (\rho) + \chi (\rho), \\
\Phi_\eta (\rho) &= \sum\limits_{i = 1}^2 \Tr (P_{\eta_i} \rho) P_{\xi_i} + \sum\limits_{i = 3}^{\dim(\Ha)} \Tr (P_{\xi_i} \rho) P_{\xi_i} \\
&= \phi_\eta (\rho) + \chi (\rho),
\end{align*}
where $\rho$ is a state on $\Ha$ and $\phi_\xi = \sum_{i = 1}^2 \Tr (P_{\xi_i} \rho) P_{\xi_i}$, $\phi_\eta = \sum_{i = 1}^2 \Tr (P_{\eta_i} \rho) P_{\xi_i}$ and $\chi = \sum_{i = 3}^{\dim(\Ha)} \Tr (P_{\xi_i} \rho) P_{\xi_i}$. Notice that
\begin{align*}
\phi_\xi (\rho) &= \phi_\xi (P_2 \rho P_2), &\phi_\xi (P_2^\perp \rho P_2^\perp) = 0, \\
\phi_\eta (\rho) &= \phi_\eta (P_2 \rho P_2), &\phi_\eta (P_2^\perp \rho P_2^\perp) = 0, \\
\chi (\rho) &= \chi (P_2^\perp \rho P_2^\perp), &\chi (P_2 \rho P_2) = 0,
\end{align*}
where $P_2^\perp = I - P_2$.
This leads to the following
\begin{align*}
\Phi_\xi (\rho) = \Phi_\xi (P_2 \rho P_2 + P_2^\perp \rho P_2^\perp), \\
\Phi_\eta (\rho) = \Phi_\eta (P_2 \rho P_2 + P_2^\perp \rho P_2^\perp).
\end{align*}
In other words, it is sufficient to consider only input states of the form $\rho = \lambda' \rho_2 + (1 - \lambda') \rho_\perp$, where $\rho_2$ and $\rho^\perp$ are states, such that $P_2 \rho_2 P_2 = \rho_2$,  $P_2^\perp \rho_\perp P_2^\perp = \rho_\perp$ and $0 \leq \lambda' \leq 1$. Since
\begin{align*}
\popt =& \max \left( \dfrac{1}{2} \Tr ( (\Phi_\xi \otimes id ) (\rho) M) \right. \\
&+ \left. \dfrac{1}{2} \Tr ( (\Phi_\eta \otimes id ) (\rho) (I - M)) \right) \\
=& \max \left( \dfrac{\lambda}{2} \Big( 1 +  \Tr \big( ( (\phi_\xi \otimes id ) (\rho_2) \right. \\
&- \left. (\phi_\eta \otimes id ) (\rho_2) )_+ \big) \Big) + \dfrac{1-\lambda}{2} \right),
\end{align*}
where $M$, $I - M$ is the POVM we use for discrimination of the channels, it is obvious that for $\popt$ to be maximal we also have to set $\lambda = 1$ and according to corollary \ref{coro:NotFullRankState} the problem reduces to discriminating the channels $\phi_\xi$ and $\phi_\eta$. Both $\phi_\xi$ and $\phi_\eta$ are unital qubit channels and by our previous results the optimal input state $\rho_{opt}$ is
\begin{equation*}
\rho_{opt} = \dfrac{1}{2} \sum\limits_{i,j = 1}^2 |\xi_i\>\<\xi_j| \otimes |\xi_i\>\<\xi_j|.
\end{equation*}
In the product basis given by $|\xi_1\>$, $\xi_2\>$, we have
\begin{align*}
(\phi_\xi \otimes id ) (\rho_{opt}) - (\phi_\eta \otimes id ) (\rho_{opt}) = \\
\dfrac{1}{4} \begin{pmatrix}
1 & -1 & 0 & 0 \\
-1 & -1 & 0 & 0 \\
0 & 0 & -1 & 1 \\
0 & 0 & 1 & 1 \\
\end{pmatrix}
\end{align*}
 This matrix has two eigenvalues $\frac{1}{2\sqrt{2}}$ and $-\frac{1}{2\sqrt{2}}$ each with multiplicity $2$. We get $\Tr \big( ( (\phi_\xi \otimes id ) (\rho_2) - (\phi_\eta \otimes id ) (\rho_2) )_+ \big) = \frac{1}{\sqrt{2}}$ and
\begin{equation*}
\popt = \dfrac{1}{2} \left( 1 + \dfrac{1}{\sqrt{2}} \right) = \dfrac{2 + \sqrt{2}}{4}
\end{equation*}
which is exactly the same as our upper bound.
\end{ex}

\section{Conclusions}
We presented necessary and sufficient conditions for optimality of a process POVM in channel discrimination, especially for a process POVM corresponding to a measurement scheme with full Schmidt rank input state. In particular, a necessary and sufficient condition for existence of an optimal measurement scheme with a given full Schmidt rank input state were found. 
In the case of maximally entangled input states, an upper bound of the optimal success probability was given if the optimality condition is not satisfied. For discrimination of two channels, we obtained a simple condition in terms of the Choi matrices of the channels and a new upper bound on the diamond norm.

The results were applied to discrimination of four types of channels. For covariant channels, known results for the 
irreducible case were extended to some reducible cases and an upper bound on the optimal success probability was found.
For qubit channels, the obtained condition generalizes previously known results to some pairs of  non-unital channels. 
We proved that for discrimination of unitary channels, maximally entangled input states are optimal only  in some very special cases. To our best knowledge, the results obtained for unitary channels and simple projective measurements are new.

An interesting  open question  is whether it is possible to obtain a similar condition for input states with lower Schmidt rank. As it was shown, there are cases when such input states  are optimal and there may even be no optimal full rank input states. It  is not only of question what the  Schmidt rank of the optimal input state may be but also how to select the subspace of the input Hilbert space that will form the support of the partial trace of the input state. Another possible future directions of research is to investigate 
optimal discrimination of more complex quantum processes.

\section*{Acknowledgments}
We thank the anonymous referee for the valuable feedback and comments that made this paper more readable and better organized and also for pointing out an easier proof of Prop. \ref{prop:unit-chan}. This research was supported by grant VEGA 2/0069/16.

\section*{Appendix: Computation of the absolute value of the difference of positive rank-1 operators}
Let $|\phi\>,|\psi\>$ be  unit vectors,  $P_\phi=|\phi\>\<\phi|$, $P_\psi=|\psi\>\<\psi|$. Let $\lambda\in (0,1)$  and 
\[
D_\lambda:=\lambda P_\psi-(1-\lambda)P_\phi.
\]
Then
\[
|D_\lambda|= |\mu_1|P_{\xi_1}+|\mu_2|P_{\xi_2},
\]
where $\mu_1,\mu_2\in \mathbb R$ are the eigenvalues and $\xi_1,\xi_2$ the 
corresponding eigenvectors of $D_\lambda$. Moreover, we have
\[
\xi_i=\alpha_i\phi+\beta_i\psi,\qquad i=1,2
\]
and since we do not worry about a phase, we may suppose that $\alpha_i\ge 0$. 
From  $D_\lambda|\xi_i\>=\mu_i|\xi_i\>$, we obtain
\begin{align*}
\lambda(\alpha_i\bar z+\beta_i)|\psi\>-(1-\lambda)(\alpha_i+\beta_i z)|\phi\>= \\
\mu_i\alpha_i|\phi\>+\mu_i\beta_i|\psi\>,
\end{align*}
where $z=\<\phi,\psi\>$. It follows that 
$\lambda(\alpha_i\bar z +\beta_i)=\mu_i\beta_i$, so that 
\[
\beta_i=k_i \bar z, \qquad i=1,2
\]
where $k_i=\frac{\lambda}{\mu_i-\lambda}\alpha_i\in \mathbb R$.  We obtain
\begin{align*}
|D_\lambda| =& \sum_{i=1}^2 |\mu_i||\alpha_i\phi+k_i\bar z \psi\>\<\alpha_i\phi+k_i\bar z \psi|\\
=& \sum_{i=1}^2 |\mu_i| (\alpha_i^2P_\phi+k_i^2 |z|^2P_\psi \\
&+\alpha_ik_iz |\phi\>\<\psi|+\alpha_ik_i\bar z |\psi\>\<\phi|)\\
=&(|\mu_1|\alpha_1^2+|\mu_2|\alpha_2^2)P_\phi \\
&+(|\mu_1| k_1^2+|\mu_2| k_2^2)|z|^2P_\psi+(\alpha_1k_1 |\mu_1| \\
&+\alpha_2k_2|\mu_2|)(z|\phi\>\<\psi|+\bar z |\psi\>\<\phi|).
\end{align*}
\end{document}